\def\calR{{\cal R}}
\def\calI{{\cal I}}
\def\Ito{It\^{o}'s }
\def\baru{{\bar{u}}}
\def\bth{{\bar{\theta}}}
\def\tth{{\tilde{\theta}}}
\def\br{{\bar{r}}}
\def\ba{{\bar{a}}}
\newcommand{\iu}{\mathrm{i}\mkern1mu}
\newcommand{\Res}{\operatorname{Res}}
\title{From the Black-Karasinski to the Verhulst model to accommodate the unconventional Fed's policy.  }
\def\thetitle1{From the Black-Karasinski to the Verhulst model ...}
\author{
\authorstyle{
Andrey Itkin{}
\textsuperscript{1}
Alexander Lipton{}
\textsuperscript{2}
and Dmitry Muravey
\textsuperscript{3}
}
\newline\newline
\textsuperscript{1}
\institution{Tandon School of Engineering, New York University, 1 Metro Tech Center, 10th floor, Brooklyn NY 11201, USA} \\
\textsuperscript{2}
\institution{The Jerusalem School of Business Administration, The Hebrew University of Jerusalem, Jerusalem, Israel;} \\
\textsuperscript{\ \ }
\institution{Connection Science and Engineering, Massachusetts Institute of Technology, Cambridge, MA, USA} \\
\textsuperscript{3}
\institution{Moscow State University, Moscow, Russia}
}
\date{\today}
\begin{document}

\maketitle
\thispagestyle{firstpage}

\lettrineabstract{In this paper, we argue that some of the most popular short-term interest models have to be revisited and modified to reflect current market conditions better. In particular, we propose a modification of the popular Black-Karasinski model, which is widely used by practitioners for modeling interest rates, credit, and commodities. Our adjustment gives rise to the stochastic Verhulst model, which is well-known in the population dynamics and epidemiology as a logistic model. We demonstrate that the Verhulst model's dynamics are well suited to the current economic environment and the Fed's actions. Besides, we derive an integral equations for the zero-coupon bond prices for both the BK and Verhulst (MBK) models. For the  BK model for small maturities up to 2 years, we solve the corresponding integral equation by using the reduced differential transform method. For the Verhulst PDE, under some mild assumptions, we find the closed-form solution.  Numerical examples show that computationally our approach is more efficient than the standard finite difference method.
}

\vspace{0.5in}

\section*{Introduction} \label{layer}

The Global Economic Crisis (GEC) of 2008-2010 caused unprecedented changes in the way central banks in general, and the mighty Fed in particular, conduct their business. The Quantitative Easing (QE) resulted in central banks embracing the fractional reserve \textit{modus operandi}. At the same time, commercial banks switched to the narrow bank model, partly by choice and partly by necessity.  The Federal Reserve has used short-term interest rates as the policy tool for achieving its macroeconomic goals. As a result, short rates were close to zero for much of the past decade, reflecting the effects of QE, low inflation caused by an aging population, and low productivity growth; see \citep{FRBSF2018}. The current economic recession due to the COVID-19 pandemic forces the Fed to push the short interest rates into extremely low or outright negative territory. Given the unprecedented level of unemployment, the economic recession is likely to pave the way for further use of the Fed's unconventional monetary policy, resulting in meager short rates.

Typically, short-rate interest models use stochastic drivers governed by an Ornstein-Uhlenbeck (OU) process and transform these drivers into the actual rates via suitable mappings. For instance, the Vasicek-Hull-White model uses a linear mapping, while the Black-Karasinski (BK) uses an exponential mapping. The Cox-Ingersoll-Ross model is an exception, which uses a driver governed by a Feller process.
For short-rate models driven by OU processes, the rate spends an approximately equal amount of time below and above its equilibrium level. This assumption was valid for decades. However, as was mentioned earlier, it is no longer adequate due to the nontraditional interventions of central banks. Once the rate becomes low, it tends to stay low for a very long time. Under these circumstances,  we have to revisit short-rate interest models and modify them to reflect prevailing current market conditions better. With this motivation in mind, we consider the popular BK model, which is widely used by practitioners for modeling interest rates, credit. A similar model, known as the  Schwartz one-factor model, is often used to model commodities. The enduring popularity of the model is because, despite some lack of tractability, it is relatively simple and guarantees non-negativity of rates. Besides, one can calibrate it to a given term-structure of interest rates and prices or implied volatilities of caps, floors, or European swaptions, provided that the mean-reversion level and volatility are functions of time.

For the reader's convenience, we provide some stylized facts about the BK model in Appendix~\ref{BK}. As can be seen from \eqref{BK}, the short-interest rate $r_t$ in this model is lognormal and positive. (If necessary, it can be made negative by using a deterministic shift $s(t)$.) Initially, this positivity was one of the significant advantages of the BK model. However, in the current environment, this feature seems to be less useful. Another problem with the model is in the lognormality of $r_t$. Indeed, the lognormality means the CDF of the distribution is right-skewed. Therefore, the time a typical path stays in the lower rate area is short, because the short-rate quickly moves to the mean-reversion level. Accordingly, we need to choose a low mean-reversion speed to rectify this behavior, but the qualitative behavior of $r_t$ remains the same. In contrast, given the above discussion, we should design an interest rate model with fat tails at the lower end.

In addition to the structural drawbacks, the BK model is not sufficiently tractable, especially when its coefficients are time-dependent. For instance, prices of zero-coupon bonds (ZCB) and highly liquid barrier options are not known in the closed form. We have to find these prices numerically by solving the corresponding partial differential equations (PDEs), see \eqref{VPDE}, either via finite differences or asymptotically.  In this paper, we present an attractive alternative, by deriving an integral equation for the ZCB price; see Appendix~\ref{App2}. The corresponding integral equation can also be solved numerically. Moreover, for small maturities (up to 2 years or so), it can be solved by using the reduced differential transform method; see Appendix~\ref{RDTM}. Numerical examples convincingly show that in this case, our method is more efficient computationally than the standard approach of solving PDEs via finite differences.

In this paper, we propose a modification of the BK model, which organically resolves the lower end fat tail issue, and improves the model tractability.  We describe the model in the next Section. We also present an integral equation for the ZCB price for our model and provide a closed-form solution of the model PDE in a particular case. We demonstrate that this solution accelerates the computation of the ZCB prices and provides a basis for efficient calibration.

\section{The modified BK (MBK/Verhulst) model}

Since the BK model doesn't support fat tails at the lower end, besides not being analytically tractable, we introduce its modified version of the form
\begin{align} \label{BK1}
d z_t &= \kappa(t)[\bar{\theta}(t) - e^{z_t}] dt + \sigma(t)dW_t, \\
r_t  &=  s(t) + R e^{z_t}, \qquad R = r_0 - s(0), \qquad (t,z_t) \in [0,T]\times(-\infty,\infty). \nonumber
\end{align}
In other words, we modify the dynamics of the stochastic variable $z_t$ in \eqref{BK} in the mean-reversion term by replacing $z_t$ with $e^{z_t}$.

In \eqref{BK1} $r_t$ is the short interest rate, $t$ is the time, $W_t$ is the standard Brownian motion, $\kappa(t) > 0$ is the speed of mean-reversion, $\bar{\theta}(t)$ is the mean-reversion level, $\sigma(t)$ is the volatility, $R$ is some constant  with the same dimensionality as $r_t$, eg., it can be $R = r(0) - s(0)$, T is the maturity. This model is similar to the Hull-White model, but preserves positivity of $r_t$ by exponentiating the OU random variable $z_t$. Because of that, usually practitioners add a deterministic function (shift) $s(t)$ to the definition of $r_t$ to address possible negative rates and be more flexible when calibrating the term-structure of the interest rates.

It can be seen, that at small $t$ $|z_t| \ll 1$, and so choosing $\bar{\theta}(t) = 1 + \theta(t)$ replicates the BK model in the linear approximation on $z_t$. Similarly, the choice $\bar{\theta}(t) = e^{\theta(t)}$ replicates the BK model at $z_t$ close the mean-reversion level $\theta(t)$. Thus, the modified BK model acquires the properties of the BK model while is a bit more tractable as this will be seen below.

By \Ito lemma and the Feynman–Kac formula any contingent claim written on the $r_t$ as the underlying (for instance, the price $F(\br,t,T)$ of a Zero-coupon bond (ZCB) with maturity $T$) solves the following partial differential equation
\begin{align} \label{PDEBK}
0 &= \fp{F}{t} + \dfrac{1}{2}\sigma^2(t) \br^2 \sop{F}{\br} + \kappa(t) \br [\tth(t) - \br] \fp{F}{\br} - (s(t) + R \br)F, \\
\br_t &= \frac{r_t - s(t)}{r_0 - s(0)} = e^{z_t}, \qquad \tth(t) = \bth(t) + \frac{\sigma^2(t)}{2\kappa(t)}. \nonumber
\end{align}
This equation should be solved subject to the same terminal and boundary conditions as in \eqref{termZCB1}
\begin{equation} \label{termZCB}
 F(T,\br)  = 1, \qquad F(t,\br)\Big|_{\br \to \infty} = 0.
\end{equation}
Note that since $\br \in (0,\infty)$, i.e., the boundary $\br = 0$ is not attainable, \eqref{PDEBK} doesn't need the boundary condition at the left boundary $\br \to 0$, as this is discussed in \citep{ItkinMuravey2020r} with a reference to Fichera theory (in other words, the PDE itself with substituted $\br = 0$ serves as the boundary condition). If, however, the boundary $\br = 0$ is attainable, the boundary condition at this point should be set as in \eqref{bc0}. This is also applicable to all below PDEs obtained from \eqref{PDEBK} by transformations.

It is worth noting that \eqref{PDEBK} is the stochastic Verhulst or stochastic logistic model, which are well-known in the population dynamics and epidemiology; see, eg., \citep{Verhulst, bacaer2011, Logistic2015} and references therein. In the past, several authors attempted to use this model in finance; see, eg., \citep{Chen2010, Londono2015, Halperin2018}. In our case, the stochastic Verhulst equation has the form
\begin{align} \label{verhulst}
d \br_t &= k(t) \br_t [\tth(t) - \br_t] dt + \sigma(t) \br_t dW_t, \\
\br_t &= [r_t - s(t)]/R, \qquad r(t=0) = r_0. \nonumber
\end{align}

\eqref{verhulst} can be explicitly solved (for the time-homogeneous coefficients this is done, eg., in \citep{Logistic2015}, Proposition~3.3). The following Proposition holds
\begin{proposition} \label{prop1}
The \eqref{verhulst} admits a unique positive solution $\br_t$
\begin{equation} \label{solV}
\br_t = \frac{1}{\kappa(t)} \frac{X_t}{\frac{T_0}{\kappa(0) \br_0} + \int_0^t X_q dq}, \qquad t \ge 0,
\end{equation}
\noindent where $X_t$ solves the lognormal SDE
\begin{equation} \label{eqS}
d X_t = \mu(t) X_t dt + \sigma(t) X_t d W_t, \qquad \mu(t) = \frac{\kappa'(t)}{\kappa(t)} + \kappa(t) \tth(t).
\end{equation}
Also,
\begin{enumerate}
\item The diffusion $\br_t$ is recurrent if and only if $q(t) \le 0$, where $q(t) = \frac{1}{2} - \frac{\kappa(t)\tth(t)}{\sigma^2(t)}, \ \forall t \in [0,T]$.

\item If $q(t) < 0, \ \forall t \in [0,T]$, assuming that the limits of $q(t)$ and $\sigma^2(t)/(2 \kappa(t)$ exist at $t \to \infty$,  the diffusion $\br_t$ converges in law towards the unique stationary Gamma probability distribution  $\Gamma\left(-2q(t),\sigma^2(t)/(2 \kappa(t)\right)_{t \to \infty}$.

\item If $\exists t_* \ge 0: q(t_*) > 0$, the diffusion goes a.s. to zero when time goes to infinity.
\end{enumerate}
\end{proposition}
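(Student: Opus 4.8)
The plan is to split the proposition into two independent tasks: (i) the explicit representation \eqref{solV} together with existence, uniqueness and positivity, and (ii) the three limiting regimes, which are classical one-dimensional-diffusion facts once the right change of scale is in hand.

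\emph{Representation, existence, uniqueness.} I would reduce \eqref{verhulst} to a linear SDE by the Bernoulli-type substitution $u_t = 1/\br_t$. Applying \Ito lemma, the $\br_t^2$ drift term and the quadratic-variation correction combine so that $du_t = \big[(\sigma^2(t) - \kappa(t)\tth(t))u_t + \kappa(t)\big]dt - \sigma(t)u_t\,dW_t$, a linear equation with multiplicative noise. Solving it by variation of constants — with $\Phi_t$ the solution of the homogeneous equation $d\Phi_t = (\sigma^2 - \kappa\tth)\Phi_t\,dt - \sigma\Phi_t\,dW_t$, $\Phi_0 = 1$ — gives $u_t = \Phi_t\big(u_0 + \int_0^t \Phi_q^{-1}\kappa(q)\,dq\big)$, and a direct computation shows $\Phi_t^{-1} = \exp\big(\int_0^t(\kappa\tth - \tfrac12\sigma^2)\,dq + \int_0^t\sigma\,dW_q\big)$, which coincides with $X_t$ of \eqref{eqS} up to the deterministic factor $\kappa(t)/\kappa(0)$ (the $\kappa'/\kappa$ term in $\mu$ integrates to $\ln(\kappa(t)/\kappa(0))$). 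Inverting back and matching the constant to $\br_0$ produces exactly \eqref{solV}. In the write-up I would not belabour this derivation but simply verify \eqref{solV}: substitute its right-hand side into \Ito lemma, using $dX_t$ from \eqref{eqS} and $\frac{d}{dt}\int_0^t X_q\,dq = X_t$, and check that \eqref{verhulst} is reproduced. Existence follows; for uniqueness and positivity, the coefficients of \eqref{verhulst} are locally Lipschitz on $(0,\infty)$, so pathwise uniqueness holds up to explosion, and since $X_t \in (0,\infty)$ a.s. and $\int_0^t X_q\,dq$ is a.s. finite and strictly positive, \eqref{solV} stays in $(0,\infty)$ for all finite $t$ — hence no explosion and it is \emph{the} solution. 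This part is routine.

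\emph{The three regimes.} Here I would invoke the scale/speed classification of one-dimensional diffusions (this is what \citep{Logistic2015} does in the time-homogeneous case; the inhomogeneous statement is read off the limiting coefficients using the assumed limits). For the diffusion on $(0,\infty)$ with generator $\tfrac12\sigma^2 y^2\partial_y^2 + \kappa y(\tth - y)\partial_y$, the scale density is $s'(y) = \exp\!\big(-\int^y \tfrac{2\kappa(\tth - z)}{\sigma^2 z}\,dz\big) = y^{2q-1}e^{2\kappa y/\sigma^2}$ (using $2q-1 = -2\kappa\tth/\sigma^2$) and the speed density is $m(y) \propto y^{-2q-1}e^{-2\kappa y/\sigma^2}$. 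At $+\infty$ one always has $\int^\infty s'\,dy = +\infty$ and $m$ integrable, so the classification is governed by the endpoint $0$: since $\int_{0^+}s'\,dy$ converges iff $q > 0$, we get $s(0^+) = -\infty \iff q \le 0$, which together with $s(+\infty) = +\infty$ is precisely recurrence — item~1. If $q < 0$, $m$ is additionally integrable at $0$, hence finite; normalizing gives the stationary density $\propto y^{-2q-1}e^{-2\kappa y/\sigma^2}$, i.e. $\Gamma(-2q,\sigma^2/(2\kappa))$, and positive Harris recurrence yields convergence in law — item~2. If $q > 0$, then $s(0^+)$ is finite while $s(+\infty) = +\infty$, so $0$ attracts; $s(\br_t) - s(0^+)$ is a nonnegative local martingale, hence a convergent supermartingale, and non-degeneracy in the interior forces the limit to be $0$, i.e. $\br_t \to 0$ a.s. — item~3.

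\emph{Time-dependent coefficients and the main obstacle.} The clean scale-function argument is intrinsically time-homogeneous, so the genuinely inhomogeneous case is where the work lies. The route I would favour is to read the three regimes directly off \eqref{solV}: $\br_t \to 0$ iff $X_t/\int_0^t X_q\,dq \to 0$, and since $\log X_t - \log X_0 = \int_0^t(\mu - \tfrac12\sigma^2)\,dq + \int_0^t\sigma\,dW_q$ with $\mu - \tfrac12\sigma^2 = \kappa'/\kappa - q\sigma^2$, the strong law of large numbers for martingales plus the assumed limits show the asymptotic drift of $\log X_t$ has the sign of $-\lim q(t)$ (the $\kappa'/\kappa$ part contributing $t^{-1}\log(\kappa(t)/\kappa(0)) \to 0$). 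When $\lim q < 0$, $X_t \to \infty$ and $\int_0^t X_q\,dq$ is asymptotically dominated by its integrand, so $\br_t$ stays bounded away from $0$, and matching the frozen coefficients to the scale/speed computation above identifies the limiting law as the stated Gamma; when $q(t)$ is eventually positive, $X_t \to 0$ and $\int_0^\infty X_q\,dq < \infty$ a.s., forcing $\br_t \to 0$; $q \equiv 0$ is the borderline null-recurrent case. I expect the main technical obstacle to be exactly this inhomogeneous bookkeeping — controlling $\int_0^t X_q\,dq$ and the nonmartingale $\kappa'/\kappa$ drift uniformly — and, for item~3, upgrading the bare hypothesis ``$\exists t_* : q(t_*) > 0$'' into a usable statement about the eventual sign of $q$; under time-homogeneous coefficients all of this collapses and one simply cites \citep{Logistic2015}.
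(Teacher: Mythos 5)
Your proposal is correct, and on the representation part it coincides with the paper: the paper's proof of \eqref{solV} is exactly the one-line verification you defer to (apply \Ito lemma to \eqref{solV} using \eqref{eqS}), and your Bernoulli linearization $u_t=1/\br_t$ with variation of constants is just the constructive version of the same computation; note that the constant $T_0$ in \eqref{solV} must be read as $X_0$ for the initial condition to match, which your derivation makes explicit. Where you diverge is items 1--3: the paper proves nothing here, it simply cites Proposition~3.3 of \citep{Logistic2015} (a time-homogeneous result, as the paper itself acknowledges), whereas you reconstruct that result from scratch via the scale/speed classification --- your densities $s'(y)=y^{2q-1}e^{2\kappa y/\sigma^2}$ and $m(y)\propto y^{-2q-1}e^{-2\kappa y/\sigma^2}$, the recurrence criterion $s(0^+)=-\infty\iff q\le 0$, the normalized speed measure giving $\Gamma\left(-2q,\sigma^2/(2\kappa)\right)$, and the supermartingale argument for absorption at $0$ are all correct and amount to an independent proof of the cited proposition. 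Your further observation is also well taken: the proposition is stated for time-dependent coefficients, and this genuinely does not follow from the homogeneous citation; in particular the hypothesis of item~3 (``$\exists\, t_*:q(t_*)>0$'') is too weak as written and would need to be strengthened to an eventual-sign condition, a gap the paper's proof shares but hides behind the reference, while your sketch via the explicit solution \eqref{solV}, the strong law for the stochastic integral in $\log X_t$, and the assumed limits of $q(t)$ and $\sigma^2(t)/(2\kappa(t))$ is the natural way to repair it. In short: same argument as the paper for \eqref{solV}, and a more self-contained (and more honest about the inhomogeneous issue) treatment of the asymptotic regimes that the paper outsources.
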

\begin{proof}
The proof can be obtained by applying \Ito lemma to \eqref{solV} and using  \eqref{eqS}. The second part follows from Proposition 3.3 in \citep{Logistic2015}. It is interesting to note, that the condition $q < 0$ is precisely the Feller condition for the famous CIR model, \citep{andersen2010interest}.
\end{proof}

Thus, the stationary distribution for the Verhulst model is the Gamma distribution. It is easy to check that as compared with the mean-reversion lognormal model (the BK model) with the same parameters, the former has much fatter tails at the lower end, while the latter has the fatter tails when $\br_t \to \infty$. However, since, under the current market conditions, we are interested in modeling the lower end in the first place, the Verhulst model has a distinct advantage compared with the BK model. In other words, the probability of having lower rates for the Verhulst model is much higher than for the BK model, and comes naturally.

To illustrate this in a slightly different way, we produce a set of Monte-Carlo paths for both models which have the same volatility and mean-reversion rate, while the mean -reversion level $\bar{\theta}(t)$ in \eqref{BK1} is chosen as $\bar{\theta}(t) = 1 + \theta(t)$, so the dynamics \eqref{BK1} corresponds to the BK dynamics in \eqref{BK} for small $z_t$. The results obtained by using parameters given in Section~\ref{numE} are presented in Fig.~\ref{MC}, which shows that $r_{BK}$ is always higher than $r_{Vh}$, which confirms our theoretical observation in above.

\begin{figure}[!htb]
\caption{Typical paths of the difference $r_{BK} - r_{Vh}$ for the short-term BK and Verhulst interest rates as a function of time.}
\label{MC}
\begin{center}
\fbox{\includegraphics[totalheight=3.5in]{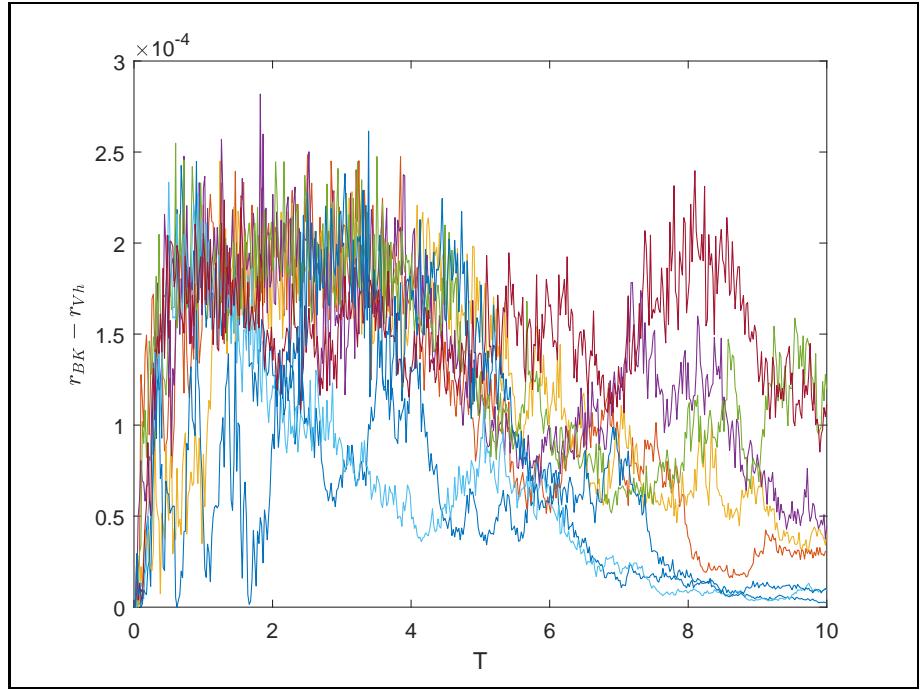}}
\end{center}
\end{figure}

Next, we aim to demonstrate that the Verhulst model is also more tractable than the BK model.

\section{An integral equation for the ZCB price in the Verhulst model} \label{intZCB}

In this section, we find the value of the ZCB by deriving and solving a Volterra integral equation of the second kind. We proceed with the elimination of the squared term in the drift in \eqref{PDEBK} via the following change of variables
\begin{equation}
   F(t, \br) = e^{\frac{\kappa(t)}{\sigma^2(t)} \br + \int_T^t s(k) dk} W(t,\br).
\end{equation}

This yields
\begin{align} \label{PDE_ZCB}
0 = \fp{W}{t} &+ \dfrac{1}{2}\sigma^2(t) \br^2 \sop{W}{\br} + \kappa(t) \tth(t) \br \fp{W}{\br}
+ \left[ \gamma(t) \br - \frac{\kappa^2(t)}{2\sigma^2(t)} \br^2  \right] W, \nonumber \\
\gamma(t) &= - R + \frac{\kappa^2(t) \tth(t)}{\sigma^2(t) } + \frac{d}{dt}\left(\frac{\kappa(t)}{\sigma^2(t)}\right), \nonumber \\
W(T,\br)  &= e^{-\frac{\kappa(T)}{\sigma^2(T)} \br}, \qquad
W(T,\br)\Big|_{\br \to \infty} = 0.
\end{align}
Another change of variables
\begin{align} \label{WPDE}
W(t, \br) &= u(\tau, y), \qquad y= \log \br + \int_0^t \left[ \frac{\sigma^2(k)}{2} - \kappa(k) \tth(k) \right] dk, \qquad \tau = \frac{1}{2}\int_t^T \sigma^2(k) dk
\end{align}
\noindent transforms this PDE into the following one
\begin{align} \label{PDE_ZCB_u}
\fp{u}{\tau} &=  \sop{u}{y} + \left[ \alpha(\tau) e^{y} + \beta(\tau) e^{2y}\right] u, \\
\alpha(\tau) &= \frac{2 \gamma(t(\tau))}{\sigma^2(t(\tau))} e^{-\int_0^t \left[ \frac{\sigma^2(k)}{2} - \kappa(k) \tth(k) \right] dk}, \qquad \beta(\tau) = - \frac{\kappa^2(t(\tau))}{\sigma^4(t(\tau))} e^{-\int_0^t \left[\sigma^2(k) - 2\kappa(k) \tth(k) \right] dk}. \nonumber
\end{align}
It is worth mentioning that, by the change of variables $ y \to -y$ and $\tau \to -\iu \tau$, this PDE  transforms into the time-dependent Schr\"{o}dinger equation with the unsteady Morse potential.

This PDE in \eqref{PDE_ZCB_u} should be solved subject to the initial and boundary conditions (see discussion in \citep{ItkinMuravey2020r})
\begin{align} \label{zcbTc}
u(0,y) &= \exp[-B(T) e^{y - A(T)}], \qquad u(\tau, y)\Big|_{y \to \infty} = 0, \\
B(T) &= \frac{\kappa(T)}{\sigma^2(T)}, \qquad A(T) = \int_0^T \left[ \frac{\sigma^2(k)}{2} - \kappa(k) \tth(k)\right] dk. \nonumber
\end{align}

Since $u(0,y), \  y \in (-\infty,\infty)$, $u(0,y)$ can be represented as
\begin{equation*}
u(0,y) = \frac{1}{2} - \frac{1}{2}(H(y) - H(-y)) + g(y), \qquad g(\infty) = g(-\infty) = 0,
\end{equation*}
\noindent where $H(y)$ is the Heaviside step function. The Fourier transforms of all parts in the RHS of this equation exist  implying that it also exists for $u(0,y)$. Therefore, applying the Fourier transform
\begin{equation} \label{FT_def}
\baru(\tau, \omega) = \int_{-\infty}^{\infty} u(\tau, y) e^{-i\omega y} dy,
\end{equation}
\noindent to both parts of \eqref{PDE_ZCB_u} yields the ordinary differential equation
\begin{align} \label{baru_ODE}
\frac{d \baru}{ d \tau} &+ \omega^2 \baru = g(\tau, \omega), \\
g(\tau, \omega) &= \int_{-\infty}^{\infty} u(\tau, y) \left[\alpha(\tau) e^{y} + \beta(\tau) e^{2y}\right] e^{-i\omega y} dy,
\qquad \baru(0, \omega) = \int_{-\infty}^{\infty} u(0, y) e^{-i\omega y} dy. \nonumber
\end{align}

The solution of this problem can be written as
\begin{equation}
\baru(\tau, \omega) =e^{-\omega^2 \tau} \baru(0,\omega) + \int_0^{\tau} e^{-\omega^2(\tau - s)} g(s, \omega) ds
\end{equation}

Now, applying the inversion formula
\[ u(\tau, y) = \frac{1}{2 \pi} \int_{-\infty}^{\infty} \baru(\tau, \omega) e^{i\omega y} d\omega,
\]
\noindent we obtain the following representation for $u(\tau, y)$
\begin{equation} \label{uRepr}
    u(\tau,y) = \frac{1}{2\pi} \int_{-\infty}^{\infty} \left[e^{-\omega^2 \tau} \baru(0,\omega) +
\int_0^{\tau} e^{-\omega^2(\tau - s)} g(s, \omega) ds \right] e^{i\omega y} d\omega
\end{equation}

Substituting the explicit representations for $\baru(0,\omega)$ and $g(\tau, \omega)$ into \eqref{uRepr}, and taking into account that the function $e^{-\omega^2\tau}$ is an even function, we obtain
\begin{align}
    u(\tau,y) = \frac{1}{\pi} \int_{0}^{\infty} \bigg\{ & e^{-\omega^2 \tau}  \int_{-\infty}^{\infty} u(0,\xi) \cos[\omega(y-\xi)] d\xi
\\ \nonumber
&+ \int_0^{\tau} ds \, e^{-\omega^2(\tau - s)} \int_{-\infty}^{\infty} u(\tau, \xi) \left[\alpha(s) e^{\xi} + \beta(s) e^{2\xi}\right]  \cos[\omega(y-\xi)] d\xi \bigg\}  d\omega.
\end{align}

Applying the identity (\citep{GR2007})
\[
\int_{-\infty}^\infty e^{-\beta x^2} \cos(b x) dx = \sqrt{\frac{\pi}{\beta}} \exp \left(-\frac{b^2}{4\beta} \right),
\]
\noindent and changing the order of integration, we get the integral equation for $u(\tau,y)$
\begin{equation} \label{IntegralEqZCB}
u(\tau,y) = \frac{1}{2\sqrt{\pi}} \Bigg\{ \int_{-\infty}^{\infty} \frac{u(0, \xi)}{\sqrt{\tau}} e^{-\frac{(y - \xi)^2}{4\tau}} d\xi +
 \int_{-\infty}^{\infty} d\xi \int_0^{\tau} \frac{u(k,\xi)}{\sqrt{\tau - k}} e^{-\frac{(y - \xi)^2}{4(\tau - k)}}
\left[\alpha(k) e^{\xi} + \beta(k) e^{2\xi}\right]  dk \Bigg\}.
\end{equation}

This is a two-dimensional Volterra equation of the second kind, see \citep{Lipton2001,LiptonKaush2020,ItkinMuravey2020r,CarrItkinMuravey2020} for the discussion. As mentioned by an anonymous referee, this equation is also a direct consequence of the Duhamel's principle.

As far as the numerical solution of \eqref{IntegralEqZCB} is concerned, the simplest scheme would be Picard iterations. First, we can change the order of integration, so the integral in time becomes the outer one. It can be approximated by using, eg. the trapezoidal rule. Then the coefficients of integration can be computed by using Fast Gauss Transform with the complexity $O(N M)$, $N$ is the number of grid points in $y$, $M$ - in $\tau$. And computing the outer integral requires $O(M^2)$ operations. Thus, if $M \le N$ the speed of the method is same as of a FD scheme with the second order of approximation in space and time (eg,, the Crank-Nicolson one). However, using higher order Simpson quadratures can reduce $M$ to $\sqrt{M}$ providing same accuracy, while doing same for the FD method is not trivial. We discuss this in more detail below in the paper.

\section{A closed-form solution for the ZCB price in the MBK model}

Here, we show that for some dependencies between the parameters of the model, the Cauchy problem  \eqref{PDE_ZCB} can be solved explicitly in terms of the Gauss hypergeometric function, \citep{as64}.

We start with the following change of variables
\begin{equation}
x = \phi(t) \br,
\qquad \phi(t) = e^{\int_0^t\left[C_\alpha \sigma^2(k) - \kappa(k) \tth(k)\right] dk},
\end{equation}
\noindent where $C_\alpha$ is a constant. This change of variables yields


This change of variables yields
\begin{align} \label{PDE_ZCB_2}
0 &= \fp{W}{t} + \dfrac{1}{2}\sigma^2(t) x^2 \sop{W}{x} + C_\alpha \sigma^2(t) x \fp{W}{x}
+ \left[ \frac{\gamma(t)}{\phi(t)} x - \frac{\kappa^2(t)}{2\sigma^2(t) \phi^2(t)} x^2 \right]W  \\
W(T,x)  &= e^{-\frac{\kappa(T)}{\sigma^2(T) \phi(T)} x }, \qquad
W(T,x)\Big|_{x \to \infty} = 0. \nonumber
\end{align}

We assume $\kappa(t)$, $\tth(t)$ and $\sigma(t)$ satisfy the following conditions
\begin{equation} \label{cond_for_WH_eq}
C_\gamma =  \frac{\gamma(t)}{\kappa(t)}, \qquad C_\sigma = \frac{\sigma^2(t) \phi(t)}{2 \kappa(t)},
\end{equation}
\noindent where $C_\gamma$ and $C_\sigma > 0$ are some constants. Using the definitions of $\gamma(t), \phi(t)$, and some algebra, we can show that under these conditions, we have
\begin{align}  \label{systemC1}
C_\gamma &= C_\alpha - \frac{R}{\kappa(t)}, \qquad C_\sigma =  \frac{\sigma^2(0)}{2\kappa(t)}, \qquad
\tth(t) = \frac{C_\alpha \sigma^2(t)}{\kappa(t)}+ \frac{2}{\kappa(t)} \frac{\sigma'(t)}{\sigma(t)}, \qquad \phi(t) = \frac{\sigma^2(0)}{\sigma^2(t)}.
\end{align}
%
%

The first equality implies that in this case the mean reversion rate $\kappa(t) = \kappa = const$, and $C_\gamma$ depends on $C_\alpha$ and $\kappa$, while $C_\sigma$ depends on $\kappa$ and $\sigma(0)$. It means that we can calibrate the model as follows. First, we calibrate the volatility term structure to the market, together with the constant mean reversion rate of $\kappa$ and the constant $ C_\alpha $. Second, we determine the time-dependent mean-reversion level $\tth(t)$  by using \eqref{systemC1}. Thus, in this version of the model, we have three calibration parameters: two of them - $\kappa$ and $C_\alpha$ are constants, and the normal volatility $\sigma(t)$ is time-dependent. In other words, this enables capturing the volatility term-structure of the market which seems to be the most important property, while assuming a constant mean reversion speed is not too restrictive. The time-dependence of the mean reversion level, however, is fully defined by $\sigma(t)$  and is corrected by another calibrated constant $C_\alpha$. So this seems to be a weak side of the model.

Now, applying another change of variables to \eqref{PDE_ZCB_2}
\begin{align} \label{anotherChange}
w &= \frac{x}{C_\sigma}, \qquad \tau = \frac{1}{2}\int_t^T \sigma^2(k) \, dk,
\qquad W(t,x) = w^{-C_\alpha}  e^{C_\alpha(1-C_\alpha)\tau} u(\tau,w),
\end{align}
\noindent we obtain the PDE with the time-homogeneous coefficients
\begin{equation} \label{PDEu_Whittaker}
0 =  \sop{u}{w} + \left[ - \frac{1}{4} + \frac{C_\gamma}{w}\right] u - \frac{1}{w^2}\fp{u}{\tau}.
\end{equation}
This PDE should be solved subject to the initial and boundary conditions
\begin{equation} \label{BC_whit}
u(0,w)  = w^{C_\alpha} e^{-\frac{\kappa(T)}{\sigma^2(T) \phi(T)} C_\sigma w} = w^{C_\alpha}e^{-w/2}, \qquad
u(\tau,w)\Big|_{w \to \infty} = 0.
\end{equation}
Applying the Laplace transform
\begin{equation} \label{LT_def}
\baru(\lambda,w) = \int_0^\infty e^{-\lambda \tau} u(w,\tau) d\tau.
\end{equation}
\noindent to \eqref{PDEu_Whittaker} and introducing $\mu = \sqrt{\lambda + 1/4}$, we obtain the following inhomogeneous ordinary differential equation
\begin{align} \label{whittaker_eq}
&\frac{d^2 \baru}{dw^2} + \left[ - \frac{1}{4} + \frac{C_\gamma}{w} + \frac{1/4 - \mu^2}{w^2}\right] \baru = -\frac{u(0,w)}{w^2}, \qquad \baru (\lambda,w)\Big|_{w \to \infty} = 0.
\end{align}

The corresponding homogeneous \eqref{whittaker_eq} is a Whittaker equation, which has two linearly independent solutions (the Whittaker functions) $M_{C_\gamma, \mu}(w)$ and $W_{C_\gamma, \mu}(w)$, \citep{as64}. A general solution of the problem \eqref{whittaker_eq} reads
\begin{align} \label{whitSol}
\baru(\lambda,w) &= C_1 M_{C_\gamma, \mu} (w) + C_2 W_{C_\gamma, \mu} (w) \\
&+ \frac{\Gamma\left(1/2 - C_\gamma + \mu\right)}{\Gamma\left(1 + 2\mu\right)}
\left(W_{C_\gamma, \mu} (w) \int_0^w \frac{u(0,\zeta)}{\zeta^2} M_{C_\gamma, \mu} (\zeta) d\zeta -
M_{C_\gamma, \mu} (w) \int_0^w \frac{u(0,\zeta)}{\zeta^2} W_{C_\gamma, \mu} (\zeta) d\zeta \right) \nonumber \\
&= C_3 M_{C_\gamma, \mu} (w) + C_2 W_{C_\gamma, \mu} (w) \nonumber \\
&+\frac{\Gamma\left(1/2 - C_\gamma + \mu\right)}{\Gamma\left(1 + 2\mu\right)}
\left(W_{C_\gamma, \mu} (w) \int_0^w \frac{u(0,\zeta)}{\zeta^2} M_{C_\gamma, \mu} (\zeta) d\zeta +
M_{C_\gamma, \mu} (w) \int_w^\infty \frac{u(0,\zeta)}{\zeta^2} W_{C_\gamma, \mu} (\zeta) d\zeta \right) \nonumber \\
C_3 &= C_1 + \int_0^\infty \frac{u(0,\zeta)}{\zeta^2} W_{C_\gamma, \mu} (\zeta) d\zeta. \nonumber
\end{align}

Using the asymptotic expressions for the Whittaker functions, \citep{as64}
\begin{align} \label{WhittakerAsymptotic}
w &\rightarrow 0: \ M_{\kappa, \mu}(w) = w^{\mu + 1/2} \left(1 + O(w) \right),\qquad
W_{\kappa, \mu}(w) = \frac{\Gamma(2\mu)}{\Gamma(1/2 +\mu- \kappa)} w^{1/2 -\mu} + O\left(w^{3/2 - \Re(\mu)}\right),  \\
w &\rightarrow \infty: \ M_{\kappa, \mu}(w) \sim \frac{\Gamma(1 + 2\mu)}{\Gamma(1/2 +\mu- \kappa)} e^{w/2} w^{-\kappa}, \qquad W_{\kappa, \mu}(w) \sim e^{-w/2} w^{\kappa}, \nonumber
\end{align}
\noindent and the boundary condition in \eqref{whittaker_eq}, we can set $C_3 = C_2 = 0$. Here $\Re(x)$ denotes the real part of $x$.

Since the integrands in \eqref{whitSol} have singularities at the points $w = 0$ and $w \to \infty$ we need to check that both functions $u(\tau, w), W(t,x)$ are regular at these points. Applying \eqref{WhittakerAsymptotic} and L'H\^{o}spital's rule yields
\begin{align*}
\lim_{w \to 0} \frac{\baru(\lambda, w)}{w^{C_\alpha}} &=
\lim_{w \to 0}  \Bigg[\frac{1}{2\mu}\frac{\int_0^w u(0,\zeta) \zeta^{-2}M_{C_\gamma, \mu} (\zeta) d\zeta}{w^{C_\alpha + \mu - 1/2}}
+ \frac{\Gamma\left(1/2 - C_\gamma + \mu\right)}{\Gamma\left(1 + 2\mu\right)}\frac{\int_w^\infty u(0,\zeta) \zeta^{-2}W_{C_\gamma, \mu} (\zeta) d\zeta}{w^{C_\alpha-\mu - 1/2}}
 \Bigg] \\
&= \frac{1}{2\mu} \lim_{w \to 0} \left[\frac{u(0,w) w^{-C_\alpha}}{C_\alpha + \mu - 1/2}  - \frac{u(0,w)w^{-C_\alpha}}{C_\alpha - \mu - 1/2} \right]
= \frac{ 1}{\left(\mu+C_\alpha - 1/2\right)\left(\mu - C_\alpha  + 1/2\right)} \lim_{w \to 0}  e^{-w/2}
\\
&= \frac{1}{\mu^2 - 1/4 - C_\alpha^2 + C_\alpha} = \frac{1}{\lambda - C_\alpha^2 + C_\alpha}
\\
\lim_{w \to \infty}  \frac{\baru(\lambda, w)}{w^{C_\alpha}}&=\lim_{w \to \infty} \left[\frac{\Gamma\left(1/2 - C_\gamma + \mu\right)}
{\Gamma\left(1 + 2\mu\right)}
\frac{\int_0^w u(0,\zeta) \zeta^{-2} M_{C_\gamma, \mu} (\zeta) d\zeta}{e^{w/2} w^{C_\alpha-C_\gamma}} + \frac{\int_w^\infty u(0,\zeta) \zeta^{-2} W_{C_\gamma, \mu} (\zeta) d\zeta}{e^{-w/2} w^{C_\alpha+C_\gamma}}
\right] \\
&=\lim_{w \to \infty} \left[ \frac{e^{w/2} w^{-C_\gamma-2} u(0, w) }{e^{w/2} w^{C_\alpha-C_\gamma}\left(1/2 +(C_\alpha- C_\gamma) / w \right)} - \frac{e^{-w/2} w^{C_\gamma -2} u(0, w)}{ e^{-w/2} w^{C_\alpha + C_\gamma}\left(-1/2 + (C_\alpha+C_\gamma) / w \right)} \right] \\
&= \lim_{w \to \infty} \left[ \frac{u(0,\omega)w^{-C_\alpha}}{ w/2 + w^2 (C_\alpha - C\gamma)}
- \frac{u(0,\omega)w^{-C_\alpha}}{ -w/2 + w^2 (C_\alpha + C\gamma)} \right] \\
&= \lim_{w \to \infty} \left[ \frac{e^{-w/2}}{ w/2 + w^2 (C_\alpha - C\gamma)}
- \frac{e^{-w/2}}{ -w/2 + w^2 (C_\alpha + C\gamma)} \right] =0.
\end{align*}
And inverting the Laplace transform  we obtain
\begin{align*}
\lim_{w \to 0} \frac{u(\tau, w)}{w^{C_\alpha}} = e^{-C_\alpha(1- C_\alpha)\tau}, \quad \lim_{w \to \infty} \frac{u(\tau, w)}{w^{C_\alpha}} =0.
\end{align*}

Returning back to the original variables yields
\begin{align*}
\lim_{x \to 0} W(t,x) = 1, \quad \lim_{x \to \infty} W(t,x) =0.
\end{align*}

Accordingly, since $\bar{r} \to 0$ implies $x \to 0$, this yields
\begin{equation} \label{bc0}
F(\tau,\bar{r})|_{\bar{r} \to 0} \to e^{\int_T^{t(\tau)} s(k) dk}.
\end{equation}
At $t=T$ (or $\tau=0$) this limit is consistent with the terminal condition in \eqref{termZCB}.

Expression \eqref{whitSol} can be further simplified by using the formula, \citep{GR2007}
\begin{align} \label{WM2I}
\int_0^{\infty} e^{-\frac{1}{2} (a_1+a_2) t \cosh(k)}  & \coth^{2\nu} \left( \frac{k}{2}  \right) I_{2\mu}
\left(t \sqrt{a_1 a_2} \sinh(k)\right) dk = \frac{\Gamma \left( 1/2 +\mu - \nu \right)}{ t \sqrt{a_1 a_2} \Gamma(1+2\mu)} W_{\nu,\mu} (a_1 t) M_{\nu,\mu} (a_2 t), \nonumber \\
&\Re\left( 1/2 + \mu - \nu \right) > 0, \qquad \Re(\mu) > 0, \qquad a_1 > a_2,
\end{align}
\noindent where $I_\nu(w)$ is the modified Bessel function. Therefore, setting $t=1, a_1 = w, a_2 = \xi$, and perceiving that \eqref{WM2I} is symmetric with respect to $a_1$ and $a_2$ (so that the integrals on $\xi$ in \eqref{whitSol} are complimentary and sum up to a single integral from 0 to infinity), we obtain
\begin{equation} \label{ubarSol}
\baru(\lambda,w) = \int_{0}^{\infty} \int_{0}^{\infty} u(0,\zeta)
\sqrt{\frac{w}{\zeta^3}}  e^{-\frac{1}{2} (w+ \zeta) \cosh(k)} \coth^{2C_\gamma} \left( \frac{k}{2}  \right) I_{2\mu} \left(\sqrt{w \zeta } \sinh(k) \right) dk \, d\zeta.
\end{equation}
Using the inversion formula for the Laplace transform, we get $u(\tau,w)$ of the form
\begin{equation} \label{inv_transform}
u(\tau,w) = \frac{1}{2\pi \iu}
\int_{\gamma_*} e^{\lambda \tau} \int_{0}^{\infty} \int_{0}^{\infty} u(0,\zeta)
\sqrt{\frac{w}{\zeta^3}} e^{-\frac{1}{2} (w+ \zeta) \cosh(k)} \coth^{2C_\gamma} \left( \frac{k}{2}  \right) I_{2 \mu} \left(\sqrt{w \zeta } \sinh(k)\right) dk \, d\zeta \, d\lambda.
\end{equation}
Here $\gamma_*$ denotes any vertical line $\Re(\lambda) = \gamma_*$ in the complex plane such that all singularities of the integrand in \eqref{inv_transform} lie to the left of this line.

Applying another identity, \citep{GR2007}
\begin{equation*}
\int_0^\infty x^{\nu - \frac 12 } e^{-\alpha x}
I_{2\mu} \left( 2 \beta \sqrt{x}\right) dx =  \frac{\Gamma(\nu + \mu + 1/2)}{\Gamma(2\mu + 1)} \beta^{-1} e^{\frac{\beta^2}{2\alpha}} \alpha^{-\nu} M_{-\nu, \mu} \left( \frac{\beta^2}{\alpha}\right), \qquad
\Re\left( \mu + \nu + \frac 12 \right) > 0,
\end{equation*}
\noindent to the internal integral of $\zeta$, we obtain the following representation for $u(\tau, w)$
\begin{align} \label{ZCB_LT}
u(\tau,w) &= \frac{e^{-w/2}}{2\pi \iu} \int_{\gamma_*} e^{\lambda \tau} \int_{0}^{\infty} \frac{2}{\sinh(k)}
\Bigg\{ \Upsilon(\lambda) e^{-\frac{w}{4} (\cosh(k) -1)} \coth^{2C_\gamma} \left( \frac{k}{2}  \right) \left( \frac{1 + \cosh(k)}{2} \right)^{1-C_\alpha} \\
&\times M_{1-C_\alpha, \sqrt{\lambda + 1/4}}\left(\frac{w}{2} (\cosh(k) - 1) \right)  \Bigg\} dk \, d\lambda,  \qquad
\Upsilon(\lambda) = \frac{\Gamma\left(\sqrt{\lambda + 1/4} +C_\alpha- 1/2 \right)}{\Gamma\left(2\sqrt{\lambda + 1/4}  + 1\right)}. \nonumber
\end{align}

By changing the variable of integration $k$ in \eqref{ZCB_LT} as
\begin{equation*}
2 \log\left[ \tanh\left( \frac k2 \right)\right] = -\nu, \qquad \frac{2 dk}{\sinh(k)} = -d\nu, \qquad \frac{\cosh(k) - 1}{2} = \frac{1}{ e^\nu - 1},
\end{equation*}
\noindent we get
\begin{align} \label{ZCB_LT2}
u(\tau,w) = \frac{e^{-w/2}}{2\pi \iu}  \int_{\gamma_*} e^{\lambda \tau} \Upsilon(\lambda)
\int_{0}^{\infty} e^{-\frac{w}{2(e^\nu - 1)} + \nu C_\gamma}  M_{1-C_\alpha, \sqrt{\lambda + 1/4}}\left(\frac{w}{e^\nu - 1}  \right) \left(\frac{e^\nu }{e^\nu - 1} \right)^{1-C_\alpha} d\nu d\lambda.
\end{align}

This integral has poles at the points $\lambda = \lambda_k$
\begin{equation}
\label{residualsPoints}
\sqrt{\lambda_k  + \frac{1}{4}} = \mu_k = \frac{1}{2} - C_\alpha - k, \quad k = 0,\ldots,\left\lfloor\frac 12 - C_\alpha \right\rfloor,
\end{equation}
\noindent since the Gamma function in the numerator of $\Upsilon(\lambda)$ turns to complex infinity when its argument is a non-positive integer or zero. The number of poles depends on the value of $C_\alpha$ : if $C_\alpha \geq 1/2$, the integrand function has no poles, if $C_\alpha <1/2$ the number of the poles is $K = 1 + \left\lfloor\frac 12 - C_\alpha \right\rfloor$, where $\lfloor x \rfloor$ is the floor of $x$. The function $\sqrt{\lambda + 1/4}$ is a multivalued function of $\lambda$, i.e., the point $\lambda = -1/4$ is a branching point. Therefore, let us construct the contour of integration in \eqref{inv_transform} as the so-called keyhole contour presented in Fig.~\ref{contour}.
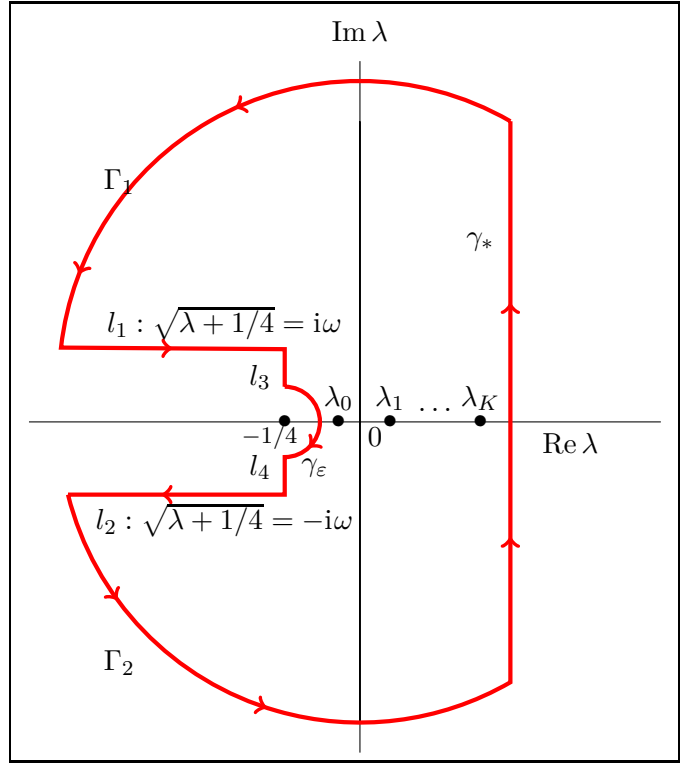
\begin{figure}[!htb]
\hspace{-0.5in}
\captionsetup{width=0.8\linewidth}	
\begin{center}
\caption{Contour of integration of \eqref{inv_transform} in the complex $\lambda$ plane with poles at
$\lambda_0,\lambda_1,\ldots,\lambda_K$. This picture corresponds to $C_\alpha > 0$, and $C_\alpha < 1/2$. If $C_\alpha < 0$, all poles are positive.}
\label{contour}
		\fbox{
			\begin{tikzpicture}
			\begin{scope}[scale=4.]
			\def\bigradius{1}
			\def\axisradius{1}
			\def\gammaradius{0}
			\def\omegaradius{0}
			\def\littleradius{0.117}
			\def\shift{1.4}

			\def\xii{14}
            \def\zii{30}
			\def\ang{6.5}
			\def\sh{0.07}
			\def\pole{-0.25}
			\def\alpha{35}
			\draw (-1.1*\axisradius, 0) -- (1.*\axisradius,0)
			(\omegaradius, -1.1*\axisradius) -- (\omegaradius, 1.2*\axisradius)
			(\gammaradius, -\bigradius) -- (\gammaradius, \bigradius);
					
			\draw[red, ultra thick, decoration={ markings,
  	 		  	 mark=at position 0.05 with {\arrow{>}}
  	 		  	,mark=at position 0.13 with {\arrow{>}}	  	
				,mark=at position 0.3 with {\arrow{>}}
				,mark=at position 0.4 with {\arrow{>}}
				,mark=at position 0.6 with {\arrow{>}}
				,mark=at position 0.7 with {\arrow{>}}
			    ,mark=at position 0.78 with {\arrow{>}}
				,mark=at position 0.88 with {\arrow{>}}
				,mark=at position 0.96 with {\arrow{>}}
			},
			postaction={decorate}]
			let
			\n1 = {asin(\sh/\littleradius)},
			\n2 = \littleradius,
			\n3 = \pole,
			\n4 = {\bigradius*cos(\xii)},
			\n5 = {\bigradius*sin(\xii)},
			\n6 = {\bigradius*sin(-90)},
			\n7 = {\bigradius*cos(\zii)},
			\n8 = {\bigradius*sin(\zii)}
			in
			(-\n4,-\n5) arc(-180+\xii:-90+\zii:\bigradius) 			
			-- (\n8, -\n6)
			(\n8,-\n6) arc(90 - \zii:180-\ang:\bigradius)
			--(\n3, \n5)
			--(\n3, \n2)
			(\n3, \n2) arc(90: -90:\littleradius)
	  		--(\n3, -\n5)
	  		--(-\n4,-\n5)
			;			
			\node at (0.7*\axisradius,-0.07){$\operatorname{Re} \lambda$};
			\node at (0,1.3*\axisradius) {$\operatorname{Im} \lambda$};
			\node at (0.05,-0.05){$0$};
			\node at (\pole-0.05,-0.05){{\footnotesize $-1/4$}};
			\node at (0.4,0.6) {$\gamma_*$};
			\node at (-\pole-0.4,-1.3*\littleradius) {$\gamma_{\varepsilon}$};
			\node at (-1/4,0) {$\bullet$};
			\node at (-1/14,0) {$\bullet$};
			\node at (0.1,0) {$\bullet$};
			\node at (0.4,0) {$\bullet$};
			\node at (-1/14,0.08) {$\lambda_0$};
			\node at (0.1,0.08) {$\lambda_1$};
            \node at (0.25,0.04) {$\ldots$};
			\node at (0.4,0.08) {$\lambda_K$};

			\node at (-0.8,0.8) {$\Gamma_1$};
			\node at (-0.8,-0.8) {$\Gamma_2$};
			\node at (-0.45, 2.8*\littleradius) {$l_1 : \sqrt{\lambda + 1/4}= \iu \omega$};
			\node at (-0.45,-2.8*\littleradius){$l_2 :\sqrt{\lambda + 1/4}= -\iu \omega$};
			\node at (-0.7*\littleradius + \pole,  0.15) {$l_3$};
			\node at (-0.7*\littleradius + \pole, -0.15) {$l_4$};
			\end{scope}
			\end{tikzpicture}
		}
	\end{center}
\end{figure}
In more detail, this contour can be described as follows. It starts with a vertical line $\gamma_*$ extending to the two big symmetric arcs $\Gamma_1$ and $\Gamma_2$ around the origin with the radius $R_1$; connecting to two horizontal parallel lines segments $l_1, l_2$ at $\operatorname{Im}(\sqrt{\lambda + 1/4}) = \pm \omega$; then extending to two vertical line segments $l_3, l_4$  which end points are connected to the semi-arc $\gamma_\varepsilon$ with the radius $\varepsilon$ around the point $\operatorname{Re}(\lambda) = -1/4$. Using a standard technique, we take the limit $\varepsilon \to 0, R_1 \to \infty$, so in this limit the integrals along the lines $l_3$ and $l_4$ are cancelled out. The integral along the contours $\Gamma_1$ and $\Gamma_2$ tends to zero if $R_1 \to \infty $ due to Jordan's lemma. Hence, according to the Cauchy residue theorem, the sum of integral along the vertical line $\gamma_*$ and two integrals along the horizontal semi-infinite lines $l_1$ and $l_2$ is equal to the sum of residuals.

Let us define the sum of residuals as $\calR(\tau, z)$ and the sum of integrals along the lines $l_1$ and $l_2$ with a negative sign as $\calI(\tau, z)$. We explicitly compute them in the next section.

\subsection{Calculation of residuals}

Using the well-known expressions for the poles of the Gamma function, and the connection formula for the Whittaker functions $M_{k, \mu}(z)$ and $W_{k,\mu}(z)$, \citep{as64}
\begin{equation*}
\underset{x=-n}\Res \Gamma(x) = \frac{(-1)^n}{n!}, \quad
W_{\frac 12 a + \frac 12 + n, \frac 12 a}(z) = (-1)^n \frac{\Gamma(a + 1 + n)}{\Gamma(a + 1)} M_{\frac 12 a + \frac 12 + n, \frac 12 a} (z),
\end{equation*}
\noindent we obtain
\begin{align} \label{residual_value}
\underset{\lambda = \lambda_k} \Res  \left\{
e^{\lambda \tau}\Upsilon(\lambda) M_{1-C_\alpha, \sqrt{\lambda + 1/4}} \left( \frac{w}{e^\nu - 1}\right) \right\}
&=\frac{2 \mu_k }{k!} \frac{e^{\left(\mu_k^2 - \frac{1}{4}\right) \tau} }{\Gamma\left( \frac 32 -C_\alpha + \mu_k\right)}  W_{1-C_\alpha, \mu_k} \left( \frac{w}{e^\nu - 1}\right).
\end{align}
Thus, the sum of the residuals after the substitutions $1 /(e^{\nu} -1) \mapsto \varsigma, \,  \mu_k = \frac 12 - k - C_\alpha$ reads
\begin{align} \label{res1}
\calR(\tau,w) &= e^{-w/2}\sum_{k = 0}^K  \frac{2 \mu_k}{k!} \frac{e^{\left(\mu_k^2 - \frac{1}{4}\right)\tau}}{\Gamma\left( \frac 32 -C_\alpha + \mu_k\right)}  \int_0^\infty W_{1-C_\alpha, \mu_k} \left( w \varsigma \right) e^{-\frac{w \varsigma}{2}} \left( 1+ \varsigma \right) ^{C_\gamma - C_\alpha} \varsigma^{-1- C_\gamma} d\varsigma.
\end{align}
The integrals of $\varsigma$ can be computed analytically
\begin{align} \label{MathIntviaGammaDef}
\int_0^\infty W_{1-C_\alpha, \mu} \left( w \varsigma \right) e^{-\frac{w \varsigma}{2}} \left( 1+ \varsigma \right) ^{C_\gamma - C_\alpha} \varsigma^{-1- C_\gamma} d\varsigma
&= w^{\frac 12 + \mu} \frac{\Gamma_{C_\gamma}^+\left(\iu \mu\right) \Gamma_{C_\gamma}^-\left(\iu \mu\right)}{\Gamma(C_\alpha - C_\gamma)} U\left(\frac{1}{2} + \mu - C_\gamma, 2\mu + 1, w\right), \nonumber \\
\Gamma^{\pm}_{y} (\omega) = \Gamma\left(\frac 12 -y \pm \iu \omega\right), \qquad
\Re(C_\gamma) &< \frac{1}{2} + \Re(\mu), \quad \Re(C_\gamma) < \frac{1}{2} - \Re(\mu), \quad w > 0.
\end{align}

Here $U(a,b,x)$ is a Kummer confluent hypergeometric function, \citep{as64}. Using the relation between the Kummer and Whittaker functions
\begin{equation*}
W_{k, \mu}(x) = e^{-\frac{z}{2}} z^{1/2 + \mu} U\left(\frac 12 + \mu - k, 1 + 2\mu, x\right),
\qquad W_{k, \mu}(x) = W_{k,-\mu}(x),
\end{equation*}
\noindent  we finally obtain from \eqref{res1}, \eqref{MathIntviaGammaDef}
\begin{align} \label{Psi_value}
\calI(\tau,w)=\sum_{k = 0}^K \frac{e^{\left(\mu_k^2 - \frac{1}{4}\right) \tau} }{k!}
\frac{2\mu_k }{\Gamma\left( \frac 32 -C_\alpha + \mu_k\right)} \frac{\Gamma_{C_\gamma}^+\left(\iu \mu_k\right) \Gamma_{C_\gamma}^-\left(\iu \mu_k\right)}{ \Gamma(C_\alpha - C_\gamma)} W_{C_\gamma, \mu_k}\left(w\right).
\end{align}

\subsection{Calculation of the integrals at different branches}

The integrals along the lines $l_1$ and $l_2$ read
\begin{align*}
&\int_{l_1}
e^{\lambda \tau} \Upsilon(\lambda) M_{1-C_\alpha, \sqrt{\lambda + 1/4}}\left(\frac{w}{e^\nu - 1}  \right) d\lambda
= 2 e^{-\tau/4} \int_0^\infty \omega e^{-\omega^2\tau}
\frac{\Gamma\left(\iu\omega + C_\alpha- 1/2 \right)}{\Gamma\left(2\iu\omega  + 1\right)} M_{1- C_\alpha, \iu\omega}\left(\frac{w}{e^\nu - 1}  \right) d\omega, \\
&\int_{l_2} e^{\lambda \tau} \Upsilon(\lambda) M_{1-C_\alpha, \sqrt{\lambda + 1/4}}\left(\frac{w}{e^\nu - 1}  \right) d\lambda = -2 e^{-\tau/4} \int_0^\infty \omega e^{-\omega^2\tau}
\frac{\Gamma\left(-\iu\omega + C_\alpha - 1 /2\right)}{\Gamma\left(-2\iu\omega  + 1\right)} M_{1-C_\alpha, -\iu\omega} \left(\frac{w}{e^\nu - 1}  \right)d\omega.
\end{align*}

Applying the connection formula between the Whittaker functions $M_{\nu, \mu}(w)$ and $W_{\nu, \mu}(w)$,  and the Euler's reflection formulas, \citep{as64}
\begin{align*}
W_{\lambda, \mu}(w) &= \frac{\Gamma(-2\mu)}{\Gamma(1/2 - \mu - \lambda)}M_{\lambda, \mu}(w) +
\frac{\Gamma(2\mu)}{\Gamma(1/2 + \mu - \lambda)}M_{\lambda, -\mu}(w) \\
\Gamma(w) \Gamma(1-w) &= \frac{\pi}{\sin(\pi w)}, \qquad \Gamma(1/2 + w) \Gamma(1/2-w) = \frac{\pi}{\cos(\pi w)}
\end{align*}
\noindent we get the following expression for the sum of the integrals along the lines $l_1$ and $l_2$
\begin{align*}
\int_{l_1 + l_2} & e^{\lambda \tau} \Upsilon(\lambda) M_{1-C_\alpha, \sqrt{\lambda + 1/4}}\left(\frac{w}{e^\nu - 1}  \right) d\lambda=  2e^{-\tau/4} \int_0^\infty \frac{\omega e^{-\omega^2\tau}}{2 \iu \omega}
\Bigg\{ \frac{\Gamma\left(\iu\omega +C_\alpha- 1/2\right)}{\Gamma\left(2\iu\omega \right)} M_{1-C_\alpha, \iu\omega}\left(\frac{w}{e^\nu - 1}  \right) \\
&+ \frac{\Gamma\left(-\iu\omega  +C_\alpha - 1 /2\right)}{\Gamma\left(-2\iu\omega \right)} M_{1-C_\alpha, -\iu\omega} \left(\frac{w}{e^\nu - 1}  \right) \Bigg\} d\omega \\
&= 2e^{-\tau/4} \int_0^\infty \frac{\omega e^{-\omega^2\tau}}{2 \iu \omega}
\frac{\Gamma_{1-C_\alpha}^+\left(\omega\right) \Gamma_{1-C_\alpha}^-\left(\omega\right)} {\Gamma\left(2\iu\omega \right)\Gamma\left(-2\iu\omega \right)} W_{1-C_\alpha, \iu\omega}\left(\frac{w}{e^\nu - 1}  \right) d\omega \\
&=-\frac{2\iu}{\pi} e^{-\tau/4} \int_0^\infty \omega e^{-\omega^2\tau} \sinh\left(2 \pi \omega \right)
\Gamma_{1-C_\alpha}^+\left(\omega\right) \Gamma_{1-C_\alpha}^-\left(\omega\right)
W_{1-C_\alpha, \iu\omega}\left(\frac{w}{e^\nu - 1}  \right) d\omega.
\end{align*}
Substituting this expression into \eqref{ZCB_LT2}, we obtain
\begin{align} \label{Phi_value}
\calI(\tau, w) &= \frac{ e^{-w/2 - \tau/4}}{\pi^2} \int_0^\infty \int_0^\infty \omega e^{-\omega^2\tau}
\sinh\left(2 \pi \omega \right) \Gamma_{1-C_\alpha}^+\left(\omega\right) \Gamma_{1-C_\alpha}^-\left(\omega\right)
W_{1-C_\alpha, \iu\omega}\left(\frac{w}{e^\nu - 1}  \right) e^{-\frac{w}{2(e^\nu - 1)}} \\
&\times \frac{e^{\nu(1-C_\alpha+ C_\gamma) }}{\left(e^\nu - 1\right)^{1-C_\alpha}}  d\nu \, d \omega  = \frac{1}{\pi^2 \Gamma (C_\alpha-C_\gamma)} \int_0^\infty
\omega e^{-(\omega^2  +1/4) \tau} \sinh\left(2 \pi \omega\right) \Gamma_{C_\gamma}^-(\omega) \Gamma_{C_\gamma}^+(\omega) \nonumber \\
&\times \Gamma_{1-C_\alpha}^-(\omega) \Gamma_{1-C_\alpha}^+(\omega) W_{C_\gamma, \iu \omega}\left(w\right) \, d \omega. \nonumber
\end{align}

\subsection{A closed-form solution for the ZCB price}

Combining \eqref{Phi_value} and \eqref{Psi_value}, we obtain $u(\tau,w)$ in closed-form
\begin{align} \label{u_explicit_f4}
u(\tau,w) =& \sum_{k = 0}^K \frac{e^{\left(\mu_k^2 - \frac{1}{4}\right) \tau} }{k!}
\frac{2 \mu_k}{\Gamma\left( \frac 32 -C_\alpha + \mu_k\right)}
\frac{\Gamma_{C_\gamma}^-\left(\iu \mu_k\right) \Gamma_{C_\gamma}^+\left(\iu \mu_k\right)}{ \Gamma(C_\alpha - C_\gamma)} W_{C_\gamma, \mu_k}\left(w\right)
\\ \nonumber
&+ \frac{1}{\pi^2 \Gamma (C_\alpha-C_\gamma)} \int_0^\infty
\omega e^{-(\omega^2  +1/4) \tau} \sinh\left(2 \pi \omega\right) \Gamma_{C_\gamma}^-(\omega) \Gamma_{C_\gamma}^+(\omega) \Gamma_{1-C_\alpha}^-(\omega) \Gamma_{1-C_\alpha}^+(\omega)
W_{C_\gamma, \iu \omega}\left(w\right) \, d \omega.
\end{align}

Since $u(0,w) = w^{C_\alpha}e^{-w/2}$, we obtain a previously unknown identity
\begin{align} \label{intNew}
\frac{1}{\pi^2}&\int_0^\infty
\omega \sinh\left(2 \pi \omega\right) \Gamma_{C_\gamma}^-(\omega) \Gamma_{C_\gamma}^+(\omega) \Gamma_{1-C_\alpha}^-(\omega) \Gamma_{1-C_\alpha}^+(\omega)
W_{C_\gamma, \iu \omega}\left(w\right) \, d \omega \\
&= w^{C_\alpha}e^{-w/2} \Gamma (C_\alpha-C_\gamma) -
\sum_{k = 0}^K \frac{1 }{k!}
\frac{2 \mu_k}{\Gamma\left( \frac 32 -C_\alpha + \mu_k\right)}
\Gamma_{C_\gamma}^-\left(\iu \mu_k\right) \Gamma_{C_\gamma}^+\left(\iu \mu_k\right) W_{C_\gamma, \mu_k}\left(w\right) \nonumber
\end{align}
\noindent which could be verified by numerical integration. Thus, \eqref{systemC1}, \eqref{u_explicit_f4} can also be represented in the form
\begin{align} \label{u_explicit_f4_new}
&F(\tau,w) e^{C_a(C_a-1)\tau - \int_T^{t(\tau)} s(k) dk} = 1 +e^{w/2} w^{-C_\alpha} \Bigg(
\sum_{k = 0}^K
\frac{2 \mu_k \left[ e^{\left(\mu_k^2 - \frac{1}{4}\right) \tau} - 1 \right]}{ k! \Gamma\left( \frac 32 -C_\alpha + \mu_k\right)}
\frac{\Gamma_{C_\gamma}^-\left(\iu \mu_k\right) \Gamma_{C_\gamma}^+\left(\iu \mu_k\right)}
{\Gamma(R/\kappa)} W_{C_\gamma, \mu_k}\left(w\right) \nonumber \\
&+ \frac{1}{\pi^2 \Gamma (R/\kappa)} \int_0^\infty \omega \left[e^{-(\omega^2  +1/4) \tau} - 1 \right] \sinh\left(2 \pi \omega\right) \Gamma_{C_\gamma}^-(\omega) \Gamma_{C_\gamma}^+(\omega) \Gamma_{1-C_\alpha}^-(\omega) \Gamma_{1-C_\alpha}^+(\omega) W_{C_\gamma, \iu \omega}\left(w\right) \, d \omega \Bigg).
\end{align}

Also, it is known that $\Gamma(a + b \iu) \Gamma(a - b \iu) \in \mathbb{R}$, \citep{Cohen1940}. Since, \citep{Temme1978}
\begin{align*}
W_{k,\mu}\left(\frac{1}{2}w^2\right) &= 2^{-k}\sqrt{w} \sum_{n=0}^\infty \frac{1}{n!}
\frac{\Gamma(\frac{1}{2} + 2 \mu + n)}{\Gamma(\frac{1}{2} + 2 \mu-n)} \frac{D_{2k-n-1/2}(w)}{(2 w)^n}, \qquad D_\nu(x) = 2^{\nu/2} U\left(-\frac{1}{2}\nu, \frac{1}{2}, \frac{1}{2} x^2\right),
\end{align*}
\noindent we have
\begin{align*}
\frac{\Gamma\left(\frac{1}{2} + 2 \mu + n\right)}{\Gamma\left(\frac{1}{2} + 2 \mu-n\right)} &=
\Gamma\left(\frac{1}{2} + 2 \mu + n\right) \frac{\cos(\pi (2\mu - n))}{\pi} \Gamma\left(\frac{1}{2} - 2 \mu + n \right) \in \mathbb{R}, \quad \mu = \iu \omega,
\end{align*}
\noindent and so $W_{C_\gamma, \iu \omega}\left(w\right) \in \mathbb{R}$. Thus, in \eqref{u_explicit_f4}  $\Gamma_{C_\gamma}^-(\omega) \Gamma_{C_\gamma}^+(\omega) \Gamma_{1-C_\alpha}^-(\omega) \Gamma_{1-C_\alpha}^+(\omega) W_{C_\gamma, \iu \omega}\left(w\right) \in \mathbb{R}$.

\section{Numerical examples} \label{numE}

To validate our analytical solution, we compute the ZCB prices by using numerical integration of \eqref{u_explicit_f4_new}, where the explicit form of the parameter $\sigma(t)$ is
\begin{equation} \label{ex}
\sigma^2(t) = \sigma_a + \frac{\sigma_b}{t + \sigma_c},
\end{equation}
\noindent where $\sigma_a, \sigma_b, \sigma_c$ are constants. We also assume that $s(t) = 0$, and so $R = r_0$. With these assumptions, we have
\begin{align} \label{exCoef}
\tth = \frac{C_\alpha \left(\sigma_a (\sigma_c + t) + \sigma_b\right) }{\kappa (\sigma_c + t)}  -\frac{\sigma_b}{k (\sigma_c+t) [\sigma_a (\sigma_c+t)+\sigma_b]}, \qquad
\tau = \frac{1}{2} \left[ \sigma_a t + \sigma_b \log\left(\frac{t + \sigma_c}{\sigma_c}\right)\right]
\end{align}
Since we are interested in the positive values of $\tth(t)$, it implies $\sigma_b < 0$.

We present the model parameters for this test in Table~\ref{tab1} (the parameters are artificial and not obtained by calibration to the market). We run the test for a set of maturities $T \in [1/12, 0.3,0.5,1,2,5, 10,20,30,50]$ years. We show the ZCB prices computed in our numerical experiment in Fig~\ref{ZCBprice}. As a benchmark, we use the numerical solution of \eqref{PDEBK} obtained by the FD method, \cite{ItkinBook}, and the solution of \eqref{BK1} obtained by Monte Carlo. To accelerate the FD solution, instead of \eqref{PDEBK}, we solve the forward equation for the density and then find the prices by integrating the payoff with the corresponding density. The FD solver runs on a non-uniform grid with 100 nodes in space $w$ and 200 steps in time $t$. However, for long maturities, more nodes in space might be necessary, see \citep{ItkinBook} in more detail. The Monte Carlo method uses 500,000 paths and 500 steps in time (so for $T=50$ years there is some bias in the results due to a small number os time steps). We perform all calculations in Matlab.
\begin{table}[!htb]
\begin{center}
\caption{Parameters of the test.}
\label{tab1}
\begin{tabular}{|c|c|c|c|c|c|}
\hline
$r_0$ & $\kappa_0$ & $\sigma_a$ & $\sigma_b$ & $\sigma_c$  & $C_\alpha$ \\
\hline
0.03 & 2.0 & 0.64 & -1.0 & 5.0  & 0.3 \\
\hline
\end{tabular}
\end{center}
\end{table}

\begin{figure}[!htb]
\vspace{-0.1in}
\caption{ZCB prices computed in the test by using Analytic, FD and Monte Carlo solutions.}
\label{ZCBprice}
\begin{center}
\fbox{\includegraphics[totalheight=3.5in]{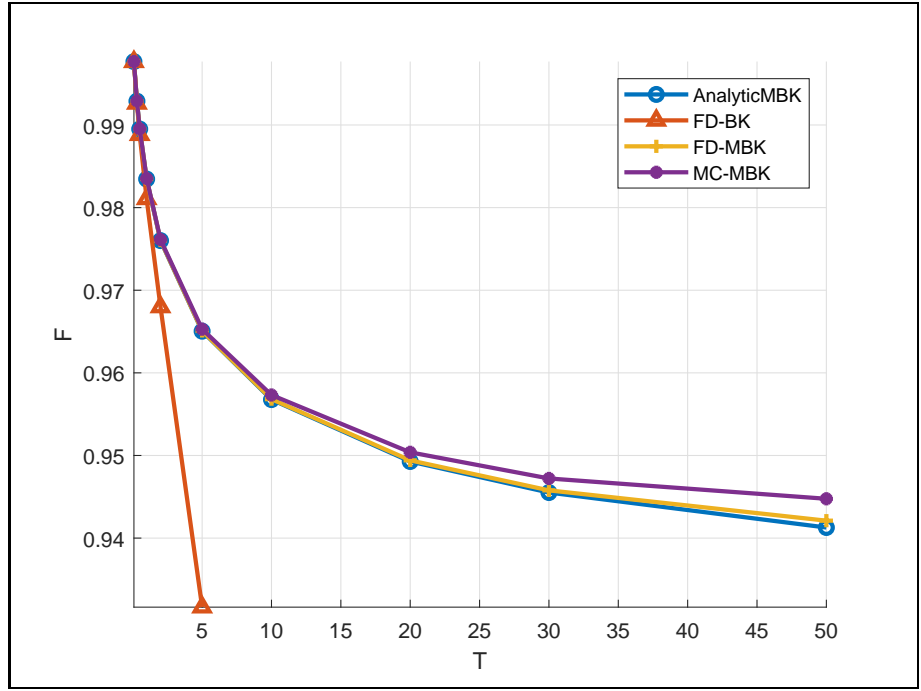}}
\end{center}
\end{figure}

\begin{table}[htbp]
  \centering
  \caption{Relative error in bps between Analytic, FD and MC solutions in the test.}
  \label{tab:relErr}%
\small
    \begin{tabular}{|l|r|r|r|r|r|r|r|r|r|r|}
    \toprule
          & \multicolumn{10}{c|}{T, years} \\
    \midrule
    Rel error & 0.0833 & 0.3   & 0.5   & 1     & 2     & 5     & 10    & 20    & 30    & 50 \\
    \hline
    Anal - FD & -0.0385 & -0.0625 & -0.0713 & -0.0929 & -0.2377 & -0.2818 & -0.5767 & -1.4387 & -2.8153 & -8.7802 \\
    \hline
    Anal - MC & -0.0844 & -0.2447 & -0.2803 & -0.6167 & -1.1802 & -2.5238 & -5.8173 & -11.7015 & -18.0272 & -36.9372 \\
    \bottomrule
    \end{tabular}%
\end{table}%

The results obtained by all methods coincide with high accuracy. We compare the corresponding relative errors of those results in Table~\ref{tab:relErr}. Also, we compute the ZCB prices in the BK model for small $T$, so that $e^{z_t} \approx 1 + z_t$. We choose $\bar{\theta}(t)_{BK} = \bar{\theta}(t)_{MBK} -1$. We show the result in Fig~\ref{ZCBprice} as well. It can be seen that the BK ZCB prices agree with the corresponding MBK ZCB prices for $T < 3$, which is due to the fact that $z$ is small.

As far as the performance of the methods is concerned, the elapsed time for computing all 10 ZCB bond prices by using the FD methods is 130 msec. For the analytics, since the only term under the integral in \eqref{u_explicit_f4_new}, which depends on $\tau$, is $e^{-(\omega^2 + 1/4)\tau}$, the other terms, including complex-valued Gamma and Whittaker functions, can be computed just once at the beginning of the script and then re-used. We calculate the integral by using the Simpson rule with 75 nodes; the elapsed time for getting all 10 ZCB prices is 55 msec. We note that internal Matlab implementation of  the Whittaker functions is a bit slow as it relies on Simulink to compute them. In other programming languages, eg., C++ or python this is not an issue. Anyway, the performance of our method is on par with that of the forward FD solver, while the accuracy is higher. Moreover, we can further improve the accuracy of the integration by using higher-order quadratures while keeping the elapsed time similar. At the same time, for the FD method, this is problematic (but perhaps can be done by using Radial Basis Functions methods).

\section{Conclusion}

To summarize our findings,  we have shown that in the current market environment, it is necessary to update the classical short-rate models. We introduced a useful extension of the popular BK model (the Verhulst model), which naturally produces prolonged periods of low rates and is more tractable. Finally, several complementary numerical and analytical methods to efficiently compute prices of the ZCB have been derived.

\vspace{0.2in}

\vspace{0.4in}
\appendixpage
\appendix
\numberwithin{equation}{section}
\setcounter{equation}{0}

\section{Stylized facts about the BK model}

The Black-Karasinski (BK) model was introduced in \citep{BK1991}, see also \citep{BM2006} for a more detailed discussion. The BK is a one-factor short-rate model of the form
\begin{align} \label{BK}
d z_t &= k(t)[\theta(t) - z_t] dt + \sigma(t)dW_t, \qquad r \in \mathbb{R}, \ t \ge 0, \\
r_t &= s(t) + R e^{z_t}, \qquad r(t=0) = r_0. \nonumber
\end{align}
Here $t$ is the time, $r_t$ is the short interest rate, $\kappa(t) > 0$ is the constant speed of mean-reversion, $\theta(t)$ is the mean-reversion level, $\sigma(t)$ is the volatility, $R$ is some constant  with the same dimensionality as $r_t$, eg., it can be 1/(1 year), $W_t$ is the standard Brownian motion. This model is similar to the Hull-White model but preserves the positivity of $r_t$ by exponentiating the OU random variable $z_t$. Frequently, practitioners add a deterministic function (shift) $s(t)$ to the definition of $r_t$ to address possible negative rates and be more flexible when calibrating the term-structure of the interest rates.

By the \Ito lemma the short rate $\bar{r}_t = (r_t - s(t))/R$ in the BK model solves the following stochastic differential equation (SDE)
\begin{equation} \label{BKr}
d \bar{r}_t = [k\theta(t) + \frac{1}{2}\sigma(t)^2 - k \log \bar{r}_t] \bar{r}_t dt + \sigma(t) \bar{r}_t dW_t.
\end{equation}
\noindent This SDE can be explicitly integrated. Let $0 \le s \le t \le T$, with $T$ being the maturity of a ZCB. Then $r_t$ can be represented as, \citep{BM2006}
\begin{equation} \label{BKsol}
\bar{r}_t = \exp\left[
e^{-k(t-s)} \log \bar{r}_s + k \int_s^t e^{-k(t-u)} \theta(u) du + \int_s^t \sigma(u) e^{-k(t-u)} dW_u \right],
\end{equation}
\noindent and thus, conditionally on filtration $\mathcal{F}_s$ is lognormally distributed and always stays positive. The expectation $\EQ[r_t | \mathcal{F}_s]$ and variance $\Var[r_t | \mathcal{F}_s]$
can be found analytically, \citep{BM2006}
\begin{align} \label{BKEQ}
\EQ[\bar{r}_t | \mathcal{F}_s] &= \exp\left[
e^{-k(t-s)} \log \bar{r}_s + k \int_s^t e^{-k(t-u)} \theta(u) du + \frac{1}{2}  \int_s^t \sigma^2(u) e^{-2 k(t-u)} du \right], \\
\Var[\bar{r}_t | \mathcal{F}_s] &= \exp\left[2 e^{-k(t-s)} \log \bar{r}_s + 2 k \int_s^t e^{-k(t-u)} \theta(u) du \right]
\left[ e^{2\mathcal{I}(s,t)} - e^{\mathcal{I}(s,t)}\right]. \nonumber
\end{align}
However, in the BK model, the price $F(t,\br)$ of a ZCB is not known in the closed form, since this model is not affine. Multiple good approximations have been developed in the literature using asymptotic expansions of various flavors; see, e.g., \citep{AntonovSpector2011,Capriotti2014,Horvath2017}, and also survey in \citep{Turfus2020}.

\section{An integral equation for the ZCB price in the BK model} \label{App2}

It is known that, written in terms of $z$, the corresponding PDE for the ZCB price $F(t,z)$  reads, \citep{andersen2010interest}
\begin{align} \label{VPDE}
0 &= \fp{F}{t} + \dfrac{1}{2}\sigma^2(t) \sop{F}{z} + \kappa(t) [\theta(t) - z] \fp{F}{z} - [s(t) + R e^z] V,
\end{align}
This equation should be solved subject to the terminal and boundary conditions, \citep{andersen2010interest} (see also discussion in \citep{CarrItkin2020jd})
\begin{equation} \label{termZCB1}
 F(T,z)  = 1, \qquad F(t,z)\Big|_{z \to \infty} = 0.
\end{equation}

Let us make the change of variables
\begin{align} \label{coefDefBK}
x &= a(t) z + b(t),  \qquad \tau = \frac{1}{2} \int_\nu^t \sigma (m)^2 e^{2 \int_\nu^m \kappa (m) \, dm} \, dm,
\qquad v(\tau, x) = e^{-\int_\nu^t s(t) \, dt}F(t,z), \\
a(t) &= e^{\int_\nu^t \kappa (m) \, dm}, \quad b(t) = -\int_\nu^t \theta (m) \kappa (m) e^{\int_\nu^m \kappa (m) \, dm} \, dm,. \nonumber
\end{align}
\noindent where $\nu = const$. As is discussed below, this constant can be chosen  to simplify the final expressions. With this change \eqref{VPDE} can be transformed to
\begin{align} \label{vEq}
0 &= \fp{v}{\tau}  + \sop{v}{x}  - \beta(t(\tau)) e^{\bar{a}(t(\tau)) x}v,  \\
\beta(t) &= \frac{2 R}{\sigma (t)^2} e^{-2 \int_\nu^t \kappa (m) \, dm -\frac{b(t)}{a(t)}}, \quad \bar{a}(t) = \frac{1}{a(t)} \nonumber \\
v(0,x) &= 1, \qquad \qquad \tau_0 = \frac{1}{2} \int_\nu^T \sigma (m)^2 e^{2 \int_\nu^m \kappa (m) \, dm} \, dm . \nonumber
\end{align}

Next, we apply the Fourier transform
\begin{equation*}
w(\omega, \tau) = \int_{-\infty}^{\infty} v(\tau,x) e^{\iu \omega x} dx,
\end{equation*}
\noindent to \eqref{vEq} to get the following problem
\begin{equation} \label{fEq}
w_\tau - \omega^2 w =  \beta(\tau) \int_{-\infty}^{\infty} v(\tau,x) e^{ \bar{a}(t) x} e^{\iu \omega x} dx,
\qquad w(\tau_0,\omega) = 2 \pi \delta (\omega ),
\end{equation}
\noindent where $\delta(\omega)$ is the Dirac delta function. The solution of \eqref{fEq} reads
\begin{equation}
w = e^{\omega^2 (\tau - \tau_0)} w(\tau_0, \omega) + \int_{\tau_0}^\tau e^{\omega^2 (\tau -k) } \beta(k) \int_{-\infty}^{\infty} F(k,\zeta) e^{ \bar{a}(k) \zeta -\int_{\nu(\tau)}^{t(\tau)} s(t) \, dt }  e^{\iu \omega \zeta} \,d\zeta \,dk.
\end{equation}
Applying the  inverse transform
\[ v(\tau,x) = \frac{1}{2\pi }  \int_{-\infty} ^{\infty} w(\tau,\omega) e^{-\iu \omega x} d\omega \]
\noindent yields
\begin{align}
v(\tau,x) &= 1 +\frac{1}{2\pi}\int_{-\infty}^{\infty}\int_{\tau_0}^\tau e^{\omega^2 (\tau -k) } \beta(k) \int_{-\infty}^{\infty} v(k,\zeta) e^{ \bar{a}(k) \zeta} e^{\iu \omega (\zeta-x)} d\zeta \, dk \, d\omega \nonumber \\
&= 1 -\frac{1}{2\sqrt{\pi}} \int_{-\infty}^{\infty}\int_{\tau_0}^\tau \frac{\beta(k) e^{\zeta \bar{a}(k)}}{\sqrt{\tau - k}} e^{-\frac{(x-\zeta)^2}{4(\tau -k)}} v(k,\zeta) d\zeta dk,
\end{align}
\noindent where in the last line the change of variables $k \to -k, \ \tau \to -\tau$ was made.

Thus, the ZCB price solves  the following two-dimensional Volterra integral equation of the second kind
\begin{equation} \label{intZCBbk}
v(\tau,x) = 1 - \frac{1}{2\sqrt{\pi}} \int_{-\infty}^{\infty}\int_{\tau_0}^\tau \frac{\beta(t(k)) e^{\bar{a}(t(k)) \zeta}}{\sqrt{\tau - k}} e^{-\frac{(x-\zeta)^2}{4(\tau -k)}} v(k,\zeta) d\zeta dk.
\end{equation}

It can also be obtained by applying the Duhamel's principle to \eqref{vEq}.

\section{Methods for solving \eqref{intZCBbk}} \label{RDTM}

\eqref{intZCBbk} is a two dimensional integral Volterra equation of the second kind.  Various authors have proposed efficient numerical methods for solving this type of equations. These methods include the block-by-block method, collocation and iterated collocation methods, the differential transform method (DTM), Galerkin and spectral Galerkin methods, multi-step collocation methods, and several other, see \citep{Torabi2019} and references therein. However, the complexity of the numerical methods (excluding the DTM) is at least $O(N^3)$, where $N$ is the number of computational nodes. On the other hand, $N$ could be taken relatively small compared, e.g., with the corresponding finite-difference (FD) method, if the high order quadrature rules are used when approximating the integrals.

Also, when applying all the methods mentioned above, the infinite interval should be replaced with a finite one. Another change of variables can do this, e.g., $\zeta \mapsto \tanh(\zeta)$. Then another class of methods can be used where the unknown function $v(\tau, x)$ is expanded into series on some basis. This basis could be a set of orthogonal functions, or Taylor series, etc.

However, a quick estimation of the solution of \eqref{intZCBbk} can be obtained along the lines of the reduced differential transform method (RDTM), \citep{Abazari2013}. The RDTM can be considered as an asymptotic solution of \eqref{intZCBbk} around some time $t = t_0$. It can be constructed with arbitrary precision. It is worth mentioning that the RDTM can not be directly applied to \eqref{intZCBbk} as the kernel in \eqref{intZCBbk} depends on $\tau$ itself. Therefore, we propose a modification of the RDTM suitable to handle this situation as well.

Next, we briefly present basic definitions of the RDTM and some theorems from \citep{Abazari2013} necessary to use this method for solving \eqref{intZCBbk}.

Consider a function of two variables $w(t,x)$, and suppose that it can be represented as a product of two single-variable functions $w(x,t) = f(x)g(t)$. The function $w(t,x)$ can be represented as
\begin{equation} \label{rdtm}
w(x,t) = \sum_{i=0}^\infty F(i)x^i \sum_{j=0}^\infty G(j) t^j = \sum_{i=0}^\infty \sum_{j=0}^\infty W_j(i)(i,j) x^i t^j,
\end{equation}
\noindent where $W(i,j) = F(i) G(j)$ is called the spectrum of $w(x,t)$.

To start with, we briefly present basic definitions of the RDTM and some theorems from \citep{Abazari2013} necessary to use this method for solving \eqref{intZCBbk}.

Consider a function of two variables $w(t,x)$, and suppose that it can be represented as a product of two single-variable functions $w(x,t) = f(x)g(t)$. The function $w(t,x)$ can be represented as.

If the double sum in \eqref{rdtm} is truncated to the $N$ terms in each variable, this expressions is the Poisson series of the input expression $w(x,t)$ with respect to the variables $(x,t)$ to order $N$ using the variable weights $W(i,j)$.

If $w(x,t)$ is an analytic function in the domain of interest, then the spectrum function
\begin{equation} \label{spec}
W_k(x) = \frac{1}{k!}\left[ \frac{\partial^k}{\partial t^k} w(x,t)\right]_{t=t_0}
\end{equation}
\noindent is called the reduced transformed function of $w(x,t)$. We use the notation where the lowercase $w(x,t)$
denotes the original function while the uppercase $W_k(x)$ stands for the reduced transformed function. The differential
inverse transform of $W_k(x)$ is defined as
\begin{equation} \label{invRDTM}
w(x,t) = \sum_{k=0}^\infty W_k(x) (t-t_0)^k.
\end{equation}
Combining \eqref{spec} and \eqref{invRDTM} one can get
\begin{equation} \label{relw}
w(x,t) = \sum_{k=0}^\infty \frac{1}{k!}\left[ \frac{\partial^k}{\partial t^k} w(x,t)\right]_{t=t_0} (t-t_0)^k.
\end{equation}

To proceed, we need the following fragment of Theorem 7 in \citep{Abazari2013}
\begin{theorem} \label{theo1}
Assume that $U_k(x), H_k(x)$ and $W_k(x)$ are the reduced differential transforms of the functions $u(x,t), h(x,t)$ and $w(x,t)$, respectively. If
\begin{equation} \label{prodRDTM}
w(x,t) = \int_{t_0}^t \int_{x_0}^x h(y,z) u(y,z) dy dz,
\end{equation}
\noindent then
\begin{equation} \label{iter1}
W_k(x) = \frac{1}{k} \int_{x_0}^x  \left( \sum_{\nu=0}^{k-1} H_\nu(y) U_{k-\nu-1}(y) \right) dy, \qquad k=1,2,\ldots.
\end{equation}
\end{theorem}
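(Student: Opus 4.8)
The plan is to work directly from the definition \eqref{spec} of the reduced differential transform, $W_k(x)=\tfrac{1}{k!}\bigl[\partial_t^{k}w(x,t)\bigr]_{t=t_0}$, and to peel the double integral \eqref{prodRDTM} apart in two stages: first strip off the outer $t$-integration to produce the $1/k$ prefactor, and then handle the inner $x$-integration of the product $h\,u$ via the Cauchy-product (discrete convolution) rule for transformed functions.

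First I would set $g(x,t):=\int_{x_0}^{x}h(y,t)\,u(y,t)\,dy$, so that $w(x,t)=\int_{t_0}^{t}g(x,z)\,dz$; in particular $w(x,t_0)=0$, consistent with \eqref{iter1} being stated only for $k\ge1$. By the fundamental theorem of calculus $\partial_t w=g$, hence $\partial_t^{k}w=\partial_t^{k-1}g$ for every $k\ge1$, and evaluating at $t=t_0$ gives
\[
W_k(x)=\frac{1}{k!}\bigl[\partial_t^{k-1}g(x,t)\bigr]_{t=t_0}=\frac{1}{k}\,\frac{1}{(k-1)!}\bigl[\partial_t^{k-1}g(x,t)\bigr]_{t=t_0}=\frac{1}{k}\,G_{k-1}(x),
\]
where $G_{k-1}(x)$ denotes the reduced differential transform of $g$. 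This already isolates the factor $1/k$ in \eqref{iter1}. Next I would compute $G_{k-1}$: since $h$ and $u$ are analytic on the domain of interest, differentiation in $t$ may be carried under the $y$-integral, so
\[
G_{k-1}(x)=\frac{1}{(k-1)!}\int_{x_0}^{x}\bigl[\partial_t^{k-1}\bigl(h(y,t)\,u(y,t)\bigr)\bigr]_{t=t_0}\,dy=\int_{x_0}^{x}P_{k-1}(y)\,dy,
\]
where $P_{k-1}(y)$ is the reduced differential transform of the product $h\,u$. It then remains to invoke the product rule for the transform, $P_{k-1}(y)=\sum_{\nu=0}^{k-1}H_\nu(y)\,U_{k-1-\nu}(y)$, and substituting this into the two displays above reproduces \eqref{iter1} exactly.

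The only step requiring genuine care — and thus the main obstacle — is the justification of the product-rule identity $P_m=\sum_{\nu=0}^{m}H_\nu U_{m-\nu}$ together with the interchange of $\partial_t$ and $\int_{x_0}^{x}$. Both are consequences of the hypothesis that $h(y,\cdot)$ and $u(y,\cdot)$ are analytic in $t$ near $t_0$ with a common positive radius of convergence, locally uniformly in $y$: the convolution formula is precisely the Cauchy product of the two Taylor series, which converges to the product throughout the common disc of convergence by absolute convergence of power series, and the differentiation under the integral sign is legitimate because the partial sums of the series for $h\,u$ converge locally uniformly in $(y,t)$. Everything else is the bookkeeping recorded above; note that the genuine $t$-dependence of the kernel $h$ (as in \eqref{intZCBbk}) is exactly what forces the sum over $\nu$ and is already accounted for in the statement.
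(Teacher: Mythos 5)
Your proof is correct. Note, however, that the paper does not give an argument of its own here: its ``proof'' is simply a citation to Abazari and Kilicman (2013), of whose Theorem~7 this statement is a fragment. What you have written is precisely the self-contained verification that the citation stands in for: the fundamental theorem of calculus applied to the outer $t$-integral produces $\partial_t^k w=\partial_t^{k-1}g$ and hence the factor $1/k$ upon passing from $1/k!$ to $1/(k-1)!$; analyticity lets you differentiate under the $y$-integral; and the sum $\sum_{\nu=0}^{k-1}H_\nu U_{k-\nu-1}$ is just the Leibniz rule for $\partial_t^{k-1}(hu)$ at $t=t_0$ divided by $(k-1)!$, i.e.\ the Cauchy product of the two Taylor series. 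Your closing remark correctly identifies the only point needing care (the product rule for the transform together with the interchange of $\partial_t$ and $\int_{x_0}^{x}$), and the analyticity hypothesis in the definition of the reduced transform is exactly what legitimizes both. So your write-up is a valid, and arguably more informative, substitute for the paper's appeal to the reference.
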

\begin{proof}
See \citep{Abazari2013}.
\end{proof}

Now one can observe that \eqref{intZCBbk} actually has the form of \eqref{prodRDTM} with $x = \infty, x_0 = -\infty, t_0 = 0$, and thus
\begin{align}
u(k, \zeta) &= -\beta(t(k)) e^{\zeta \bar{a}(t(k))} \frac{1}{2\sqrt{\pi (\tau-k)}} e^{-\frac{(x-\zeta)^2}{4(\tau-k)}}.
\end{align}

Our modification of the RDTM consists in eliminating the definition in \eqref{spec} for the function $u(k,\zeta)$.
Then, based on the definition of the reduced differential transform in \eqref{spec}, we get
\begin{align} \label{Udef}
U_0(s,\zeta) &= -\beta(t(s)) e^{\zeta \bar{a}(t(s))} \frac{1}{2\sqrt{\pi (\tau-s)}} e^{-\frac{(x-\zeta)^2}{4(\tau-s)}} , \\
U_1(s,\zeta) &= - \int_0^s \beta(t(k)) e^{\zeta \bar{a}(t(k))} \frac{1}{2\sqrt{\pi (s-k)}} e^{-\frac{(x-\zeta)^2}{4(s-k)}} dk, \nonumber
 \end{align}
\noindent and so on.

Let us denote the reduced differential transform of $v(\tau,x)$ as $W_k(x)$.  From \eqref{intZCBbk} it follows that $W_0(x) = 1$, as the double integral vanishes at $\tau=0$, and the following properties of the RDT hold
\begin{alignat*}{2}
w_k(x) &= u_k(x) \pm v_k(x) \quad &&\longmapsto \quad W_k(x) = U_k(x) \pm V_k(x),   \\
w(x,t) &= A = const, \quad &&\longmapsto \quad W_0(x) = 1, \quad W_k(x) = 0, \ k > 0,
\end{alignat*}
Then from \eqref{iter1}, and \eqref{Udef} we have\footnote{This expression now contains an integral in time since we eliminated the Taylor series expansion in \eqref{spec}.}
\begin{equation} \label{w1}
W_1(x) =  \int_0^\tau \int_{-\infty}^\infty  W_0(s, \zeta) U_0(s,\zeta) d\zeta ds =
\int_0^\tau e^{\ba(s) (\ba(s) (\tau -s)+x)} ds.
\end{equation}
The next iteration reads
\begin{align} \label{w2}
W_2(x) &= \int_0^\tau \int_{-\infty}^\infty  \left[ W_1(\zeta) U_0(\zeta) + W_0(\zeta)U_1(\zeta) \right]d \zeta \, d s = I_1 + I2, \\
I_1 &= \int_0^\tau  \int_0^s  \beta (k) \beta (s) e^{a(k) (2 a(s) (\tau -s)+x)+a(k)^2 (\tau -k)+a(s) (a(s) (\tau -s)+x)} d k \, d s, \nonumber \\
I_2 &= - \int_0^\tau  \int_0^s \beta(t(k)) e^{\zeta \bar{a}(t(k))} \frac{1}{2\sqrt{\pi (s-k)}} e^{-\frac{(x-\zeta)^2}{4(s-k)}}  d k \, d s, \nonumber
\end{align}
\noindent etc.

Once all the terms $W_k(x), \ k=1,\ldots,N$ are found, the final representation of the solution follows from the inverse formula \eqref{invRDTM} changed according to our modification of the RDTM
\begin{equation} \label{finRDTM}
v(\tau,x) = \sum_{k=0}^N W_k(x).
\end{equation}
The time-integrals in \eqref{w1}, \eqref{w2} can be computed either numerically, or analytically if functions $\ba(\tau), \beta(\tau)$ could be expanded into series on $\tau$ around some $\tau_0$. In the latter case the method becomes almost identical to the original RDTM. When doing so, one has to remember that derivatives of $a(\tau), \beta(\tau)$ are the derivatives on $\tau$ while the definitions of these functions in \eqref{coefDefBK} are given in terms of $t = t(\tau)$. The latter map is also given in \eqref{coefDefBK}.

\subsection{Numerical example}

To test the RDTM as applied to our problem, we solve \eqref{intZCBbk} by using the modified RDTM described in Section~\ref{RDTM}. Here we use the following explicit form for $\kappa(t), \tth(t), \sigma(t)$
\begin{equation} \label{ex}
\kappa(t) = \kappa_0, \qquad \tth(t) = \tth_0 e^{\theta_1 t}, \qquad \sigma(t) = \sigma_0 e^{-\sigma_1 t},
\end{equation}
\noindent where $\kappa_0, \theta_0, \sigma_0, \theta_1, \sigma_1$ are constants. We also assume $s(t) = 0$, and $R = 1$. With these definitions, we have
\begin{align} \label{exCoef}
a(t) &= e^{-k (t-\nu)}, \quad b(t) = \left( e^{\theta_1 \nu} -
e^{t(\kappa_0  +\theta_1) - \kappa_0 \nu} \right) \frac{\kappa_0 \theta_0}{\kappa_0 + \theta_0}, \\
\beta(t) &= \frac{2}{\sigma^2 _0} \exp\left[
\frac{2 \theta_0}{\kappa_0 + \theta_1} \left( e^{\theta_1 t} - e^{\nu (\theta_1 + \kappa_0)  - \kappa_0 t } \right)
+2  t(\sigma_1 - \kappa_0) + 2 \kappa_0 \nu \right], \nonumber \\
t(\tau) &=  \frac{\log \left(e^{-2 \sigma_1 \nu} - \frac{2 \tau  (\kappa_0 - 2 \sigma_1)}{\sigma^2_0}\right) + \kappa_0  \nu}{\kappa_0 - 2\sigma_1}. \nonumber
\end{align}

\begin{table}[!htb]
\begin{center}
\caption{Parameters of the test.}
\label{tab1}
\begin{tabular}{|c|c|c|c|c|c|}
\hline
$r_0$ & $\kappa_0$ & $\theta_0$ & $\sigma_0$ & $\theta_1$ & $\sigma_1$ \\
\hline
0.01 & 1.0 & 0.05 & 0.5 & 0.2 & 0.2   \\
\hline
\end{tabular}
\end{center}
\end{table}

\begin{table}[htbp]
  \centering
  \caption{Prices of ZCB bonds with different maturities computed by using the modified RDTM with 1 and 2 terms, and the FD difference method.}
\label{tab2}%
\small
\begin{tabular}{|l|r|r|r|r|r|r|}
\cmidrule{2-7}    \multicolumn{1}{r|}{} & \multicolumn{6}{c|}{\textbf{ZCB price, \$}} \\
    \midrule
    \textbf{T} & \textbf{0.0833} & \textbf{0.3} & \textbf{0.5} & \textbf{1.0} & \textbf{2.0} & \textbf{5.0} \\
    \hline
    \textbf{FD} & 0.9990 & 0.9938 & 0.9839 & 0.9216 & 0.6201 & 0.0444 \\
    \hline
    \textbf{RDTM-1} & 0.9990 & 0.9950 & 0.9887 & 0.9617 & 0.5857 & 0.2477 \\
    \hline
    \textbf{RDTM-2} & 0.9990 & 0.9949 & 0.9883 & 0.9587 & 0.6186 & 0.6087 \\
    \hline
    \multicolumn{1}{r|}{} & \multicolumn{6}{c|}{\textbf{Difference with the FD method, \%}} \\
    \hline
    \textbf{T} & \textbf{0.0833} & \textbf{0.3} & \textbf{0.5} & \textbf{1.0} & \textbf{2.0} & \textbf{5.0} \\
    \hline
    \textbf{FD} & 0.0000 & 0.0000 & 0.0000 & 0.0000 & 0.0000 & 0.0000 \\
    \hline
    \textbf{RDTM-1} & 0.0047 & 0.1142 & 0.4907 & 4.3563 & -5.5489 & 457.6831 \\
    \hline
    \textbf{RDTM-2} & 0.0042 & 0.1039 & 0.4494 & 4.0263 & -0.2491 & 1270.3588 \\
    \bottomrule
    \end{tabular}%
\end{table}%

We present the model parameters for this test in Table~\ref{tab1}. We run the test for a set of maturities $T \in [1/12, 0.3,0.5,1,2,5]$. We show the corresponding ZCB prices in Table~\ref{tab2}.
As a benchmark, we use the solution of \eqref{VPDE} obtained by using the FD method described above. The modified RDTM provides reasonable accuracy for small maturities (up to 2 years), while for $T > 2$, one or two terms in the expansion are insufficient to get the correct price. Therefore, for larger $T$, the \eqref{intZCBbk} has to either be solved numerically or more terms should be taken in the RDTM.

Note, that there are at least two choices for $\nu$ in \eqref{coefDefBK}: $\nu=0$ and $\nu=T$. We found that for small $T$ the choice $\nu=0$ provides slightly better results, while for $T > 1$ it is better to use $\nu=T$.

Also, note that this method, in some sense, is similar to that in \citep{Capriotti2014}. However, we solve an integral equation instead of a PDE in \citep{Capriotti2014}. Besides, there is a difference in parametrization, since we assume that all the parameters are time-dependent. In contrast, in \citep{Capriotti2014}, all model parameters are constant.

As far as the performance of the RDTM is concerned, we compared it with the performance of the FD method applied to the forward equation (the forward analog of \eqref{VPDE}. The elapsed time of getting the ZCB prices by solving such the equation is 40 msec while using the RDTM even with the numerical computation of all integrals in \eqref{w2} takes 13 msec. Therefore, this method allows fast calculation of ZCB prices in the time-dependent BK model for $T < 2$.

\end{document}